\newtheorem{Theorem}{Theorem}[section]
\newtheorem{Lemma}{Lemma}[section]
\newtheorem{Corollary}{Corollary}[section]
\newtheorem{Remark}{Remark}[section]
\newtheorem{Definition}{Definition}[section]
\newtheorem{Proposition}{Proposition}[section]
\newtheorem{Example}{Example}[section]
\makeatletter \@addtoreset{equation}{section} \makeatother
\begin{document}
\title{New Constructions of MDS Twisted Reed-Solomon Codes and LCD MDS Codes }

\author{Hongwei Liu,~Shengwei Liu}
\date{School of Mathematics and Statistics,
 Central China Normal University, Wuhan 430079, China}
\maketitle
\insert\footins{\small{\it Email addresses}: ~
 hwliu@mail.ccnu.edu.cn (Hongwei Liu); ~
 shengweiliu@mails.ccnu.edu.cn (Shengwei Liu).\\
 }
\section*{Abstract}
 \addcontentsline{toc}{section}{\protect Abstract} 
 \setcounter{equation}{0} 
  Maximum distance separable (MDS) codes are optimal where the minimum distance cannot be improved for a given length and code size. Twisted Reed-Solomon codes over finite fields were introduced in 2017, which are generalization of Reed-Solomon codes. Twisted Reed-Solomon codes can be applied in cryptography which prefer the codes with large minimum distance. MDS codes can be constructed from twisted Reed-Solomon codes, and most of them are not equivalent to Reed-Solomon codes. In this paper, we first generalize twisted Reed-Solomon codes to generalized twisted Reed-Solomon codes, then we give some new explicit constructions of MDS (generalized) twisted Reed-Solomon codes. In some cases, our constructions can get  MDS codes with the length longer than the constructions of previous works. Linear complementary dual (LCD) codes are linear codes that intersect with their duals trivially. LCD codes can be applied in cryptography. This application of LCD codes renewed the interest in the construction of LCD codes having a large minimum distance. We also provide new constructions of LCD MDS codes from generalized twisted Reed-Solomon codes.

\medskip
\noindent{\large\bf Keywords: }\medskip  Twisted Reed-Solomon codes, MDS codes, Linear complementary dual codes

\noindent{\bf2010 Mathematics Subject Classification}: 94B05, 94B65.

\section{Introduction}
  A linear code $C$ of length $n$, dimension $k$ and minimum Hamming distance $d$ over the finite field $\mathbb{F}_{q}$ is called an $[\,n, k, d\,]$ code. If the parameters of the code $C$ reach the Singleton bound $d=n-k+1$, then $ C$ is called a {\it maximum distance separable (MDS) code} \cite{R. Roth}. The most prominent MDS codes are {\it generalized Reed-Solomon (GRS) codes} \cite{R. Roth}. Recently, {\it twisted Reed-Solomon (TRS) codes} were firstly introduced in \cite{P. Beelen} by Beelen, Puchinger, and Rosenkilde n\'{e} Nielsen as a generalization of Reed-Solomon codes. In general, twisted Reed-Solomon codes do not necessarily lead to MDS codes. The authors in \cite{P. Beelen} gave two constructions of MDS twisted Reed-Solomon codes and showed that for $q\geq 11$,  this class of TRS codes contains non-GRS MDS codes. The authors also showed that twisted Reed-Solomon codes could be well decoded. Afterwards, they generalized the above constructions by adding extra monomial (twise), and obtained a construction of MDS twisted Reed-Solomon codes and showed that twisted Reed-Solomon codes could be applied to code-based cryptography, in other words, they resist some existing structural attacks for Reed-Solomon-like codes \cite{M. Bossert}. In \cite{Julien}, Lavauzelle and Renner presented an efficient key-recovery attack by twisted Reed-Solomon codes that was used in the McEliece cryptosystem. Furthermore, those applications prefer twisted Reed-Solomon codes with a large minimum distance. Therefore, it is meaningful to give more explicit constructions of MDS twisted Reed-Solomon codes. In this work, we give new constructions of MDS twisted Reed-Solomon codes, and in some cases, our constructions can get some MDS codes with the length longer than the constructions in \cite{P. Beelen}.

  The dual code of a linear code $C$ in $\mathbb{F}_{q}^{n}$ is denoted by $ C^{\bot}$. If $ C\subseteq  C^{\bot}$, then $ C$ is called a {\it self orthogonal code}. If $ C \bigcap C^{\bot} = \{0\}$, $ C$ is called a {\it linear complementary dual(LCD) code}. LCD codes were introduced by Massey \cite{J. L. Massey}. Then, many authors studied them (e.g., see \cite{H. Q. Dinh}-\cite{X. Hou}, \cite{C. Ding}-\cite{J. L. Massey} and \cite{Y. Wu}-\cite{Z. Zhou}). LCD codes were widely applied in coding theory and cryptography, especially in designing decoding algorithm. Carlet and Guilley showed that LCD codes are important in information protection and armoring implementations against side-channel attacks and fault non-invasive attacks in \cite{Guilley}. This application of LCD codes prefer the codes with a large minimum distance. Thus, it is significant to construct LCD MDS codes in theory and practice. In \cite{J. Qian}, Qian and Zhang constructed LCD MDS codes from constacyclic codes. In \cite{Jin L.2}, Jin constructed several classes of LCD MDS codes through generalized Reed-Solomon codes. In \cite{P. Beelen2}, Beelen and Jin gave an explicit construction of several classes of LCD MDS codes, using tools from algebraic function fields. In \cite{B. Chen}, Chen and Liu gave a different approach to obtain new LCD MDS codes from generalized Reed-Solomon codes which extended the results by Jin. In \cite{Xueying Shi}, Shi, Yue and Yang constructed some new LCD MDS codes from generalized Reed-Solomon codes. In \cite{Carlet C.}, Carlet et al. showed that linear codes over $\mathbb{F}_{q}$ are equivalent to LCD codes for $q>3$. In order to construct LCD MDS codes, we slightly generalize twisted Reed-Solomon codes and provide new constructions of LCD MDS codes from generalized twisted Reed-Solomon codes.

  This paper is organized as follows. Section $2$ gives the preliminaries. In Section $3$, we give new explicit constructions of MDS twisted Reed-Solomon codes. In Section $4$, LCD MDS codes are constructed from twisted generalized Reed-Solomon codes. Section~5 concludes our work.

\section{Preliminaries}
Let $\mathbb{F}_{q}$ be the finite field of $q$ elements, where $q$ is a prime power. Let $\mathbb{F}_{q}^{*}=\mathbb{F}_{q}\backslash\{0\}$ be the multiplicative group of $\mathbb{F}_q$. The $(i,j)$th entry of a matrix $A$ over $\mathbb{F}_q$ is denoted by $A_{ij}$. The transpose of $A$ is denoted by $A^{T}$. The size of a finite set $S$ is denoted by $|S|$. The set $S^{n}$ for a set $S$ is defined to be $\{(s_{1},\dots,s_{n}):s_{i}\in S,~i=1,\dots,n\}$. A linear code $ C$ of length $n$, dimension $k$ and minimum distance $d$ over $\mathbb{F}_{q}$ is called an $[\,n, k, d\,]$ code. If the parameters of the code $ C$ reach the Singleton bound: $d=n-k+1$, then $C$ is called a maximum distance separable (MDS) code.

The dual code of a linear code $C$ in $\mathbb{F}_{q}^{n}$ is denoted by $ C^{\bot}$. If $ C\subseteq  C^{\bot}$, then $C$ is called a self orthogonal code. If $ C \bigcap C^{\bot} = \{0\}$, then $ C$ is called a linear complementary dual (LCD) code. In this paper, the inner product between two vectors over $\mathbb{F}_q$ is always the Euclidean inner product.

Let $\mathbb{F}_{q}[x]$ be the polynomial ring over $\mathbb{F}_{q}$.
Let $\bm\alpha=(\alpha_{1},\alpha_{2},...,\alpha_{n})$ and $\bm v=(v_1,v_2\cdots, v_n)$ be two vectors of length $n$ over $\mathbb{F}_q$. We define the evaluation map related to $\bm\alpha$ and $\bm v$ as follows:

$$
ev_{\bm\alpha, \bm v}:\mathbb{F}_{q}[x]\rightarrow\mathbb{F}_{q}^{n},~f(x) \mapsto (v_1f(\alpha_{1}),v_2f(\alpha_{2}),...,v_nf(\alpha_{n})).
$$
\begin{Definition}\label{def-1}\cite{R. Roth}
Let $\alpha_{1},\dots,\alpha_{n}\in\mathbb{F}_{q}\bigcup\{\infty\}$ be distinct elements, $k<n$, and $v_{1},\dots,v_{n}\in\mathbb{F}_{q}^{*}$. The corresponding generalized Reed-Solomon (GRS) code is defined by
$$
GRS_{n,k} := \{(v_{1}f(\alpha_{1}),\dots,v_{n}f(\alpha_{n})): f\in \mathbb{F}_{q}[x], \deg f<k\}.
$$
In this setting, for a polynomial $f(x)$ of degree $\deg f(x)<k$, the quantity $f(\infty)$ is defined as the coefficient of $x^{k-1}$ in the polynomial $f(x)$. In case $v_{i}=1$ for all $i$, the code is called a Reed-Solomon (RS) code.
\end{Definition}

Let $1\leq\ell\leq k$, and $n\geq k$ be positive integers. Let $\bm h=(h_{1},h_{2},...,h_{\ell})\in\{0,1,...,k-1\}^{\ell}$ be a vector such that $h_{i}$ are all distinct and $h_{1}<h_{2}<\dots<h_{\ell}$, and let $\bm t=(t_{1},t_{2},...,t_{\ell})\in\{1,...,n-k\}^{\ell}$ such that all $t_{j}$ are distinct. Let $\bm \eta=(\eta_{1},\eta_{2},...,\eta_{\ell})\in\{\mathbb{F}_{q}\backslash0\}^{\ell}$. The set of $[\bm t,\bm h,\bm \eta]$-{\it twisted polynomials} is defined as

$$
\mathcal{P}_{k,n}[\bm t,\bm h,\bm\eta]
:=\mathbf{\{}\sum_{i=0}^{k-1}f_{i}x^{i}+\sum_{j=1}^{\ell}\eta_{j}f_{h_{j}}x^{k-1+t_{j}}:f_{i}\in\mathbb{F}_{q}\mathbf{\}}\subseteq\mathbb{F}_{q}[x].
$$
\begin{Definition}\label{def-2}
Let the entries of $\bm\alpha=(\alpha_{1},\alpha_{2},...,\alpha_{n})\in\mathbb{F}_{q}^{n}$ be pairwise distinct, $\bm v=(v_{1},v_{2},...,v_{n})\in (\mathbb{F}_{q}\setminus\{0\})^{n}$ and fix $1\leq k\leq n$. Let $\mathbf{t},\mathbf{h},\bm\eta$ and $\mathcal{P}_{k,n}[\mathbf{t},\mathbf{h},\bm\eta]$ be defined as above. The generalized twisted Reed-Solomon(GTRS)code of length n, dimension k and locators $\bm\alpha$ is defined by

$$
GTRS_{k,n}[\bm\alpha,\bm t,\bm h,\bm\eta,\bm v]:=\{ev_{\bm\alpha, \bm v}(f): f\in\mathcal{P}_{k,n}[\mathbf{t},\mathbf{h},\bm\eta]\}.
$$
\end{Definition}
By the definition of GTRS codes, a generator matrix of $GTRS_{k,n}[\bm\alpha,\bm t,\bm h,\bm\eta,\bm v]$ is given by
$$
G=\left[
\begin{matrix}
v_{1}&v_{2}&\dots&v_{n}\\
v_{1}\alpha_{1}&v_{2}\alpha_{2}&\dots&v_{n}\alpha_{n}\\
\vdots&\vdots&&\vdots\\
v_{1}\alpha_{1}^{h_{1}-1}&v_{2}\alpha_{2}^{h_{1}-1}&\dots&v_{n}\alpha_{n}^{h_{1}-1}\\
v_{1}(\alpha_{1}^{h_{1}}+\eta_{1}\alpha_{1}^{k-1+t_{1}})&v_{2}(\alpha_{2}^{h_{1}}+\eta_{1}\alpha_{2}^{k-1+t_{1}})&\dots&v_{n}(\alpha_{n}^{h_{1}}+\eta_{1}\alpha_{n}^{k-1+t_{1}})\\
v_{1}\alpha_{1}^{h_{1}+1}&v_{2}\alpha_{2}^{h_{1}+1}&\dots&v_{n}\alpha_{n}^{h_{1}+1}\\
\vdots&\vdots&&\vdots\\
v_{1}\alpha_{1}^{h_{\ell}-1}&v_{2}\alpha_{2}^{h_{\ell}-1}&\dots&v_{n}\alpha_{n}^{h_{\ell}-1}\\
v_{1}(\alpha_{1}^{h_{\ell}}+\eta_{\ell}\alpha_{1}^{k-1+t_{\ell}})&v_{2}(\alpha_{2}^{h_{\ell}}+\eta_{\ell}\alpha_{2}^{k-1+t_{\ell}})&\dots&v_{n}(\alpha_{n}^{h_{\ell}}+\eta_{\ell}\alpha_{n}^{k-1+t_{\ell}})\\
v_{1}\alpha_{1}^{h_{\ell}+1}&v_{2}\alpha_{2}^{h_{\ell}+1}&\dots&v_{n}\alpha_{n}^{h_{\ell}+1}\\
\vdots&\vdots&&\vdots\\
v_{1}\alpha_{1}^{k-1}&v_{2}\alpha_{2}^{k-1}&\dots&v_{n}\alpha_{n}^{k-1}
\end{matrix}
\right]_{k\times n}.
$$

If the vector $\bm v=\bm 1$, the all $1$'s vector, i.e., $v_1=v_2=\cdots=v_n=1$, then the GTRS code is called the {\it twisted Reed-Solomon code}, which was first introduced in \cite{M. Bossert} as a generalization of the RS code \cite{P. Beelen} \cite{M. Bossert}. For short, we denote it by
$$
TRS_{k,n}[\bm\alpha,\mathbf{t},\mathbf{h},\bm\eta]:=GTRS_{k,n}[\bm\alpha,\bm t,\bm h,\bm\eta,\bm 1]=\{ev_{\bm\alpha, \bm 1}(f):f\in\mathcal{P}_{k,n}[\mathbf{t},\mathbf{h},\bm\eta]\}.
$$

Obviously, a generator matrix of $TRS_{k,n}[\bm\alpha,\bm t,\bm h,\bm\eta]$ is given by

$$
G_{\bm\alpha,\bm t,\bm h,\bm\eta}=\left[
\begin{matrix}
1&1&\dots&1\\
\alpha_{1}&\alpha_{2}&\dots&\alpha_{n}\\
\vdots&\vdots&&\vdots\\
\alpha_{1}^{h_{1}-1}&\alpha_{2}^{h_{1}-1}&\dots&\alpha_{n}^{h_{1}-1}\\
\alpha_{1}^{h_{1}}+\eta_{1}\alpha_{1}^{k-1+t_{1}}&\alpha_{2}^{h_{1}}+\eta_{1}\alpha_{2}^{k-1+t_{1}}&\dots&\alpha_{n}^{h_{1}}+\eta_{1}\alpha_{n}^{k-1+t_{1}}\\
\alpha_{1}^{h_{1}+1}&\alpha_{2}^{h_{1}+1}&\dots&\alpha_{n}^{h_{1}+1}\\
\vdots&\vdots&&\vdots\\
\alpha_{1}^{h_{\ell}-1}&\alpha_{2}^{h_{\ell}-1}&\dots&\alpha_{n}^{h_{\ell}-1}\\
\alpha_{1}^{h_{\ell}}+\eta_{\ell}\alpha_{1}^{k-1+t_{\ell}}&\alpha_{2}^{h_{\ell}}+\eta_{\ell}\alpha_{2}^{k-1+t_{\ell}}&\dots&\alpha_{n}^{h_{\ell}}+\eta_{\ell}\alpha_{n}^{k-1+t_{\ell}}\\
\alpha_{1}^{h_{\ell}+1}&\alpha_{2}^{h_{\ell}+1}&\dots&\alpha_{n}^{h_{\ell}+1}\\
\vdots&\vdots&&\vdots\\
\alpha_{1}^{k-1}&\alpha_{2}^{k-1}&\dots&\alpha_{n}^{k-1}
\end{matrix}
\right]_{k\times n}.
$$

\begin{Definition}\label{def-3}
Let $C_{1}$, $C_{2}$ be $\mathbb{F}_{q}$-linear $[~n,k~]$ codes. We say that $C_{1}$ and $C_{2}$ are {\it equivalent} if there is a permutation $\pi\in S_{n}$ and $\bm v=(v_{1},\dots,v_{n})\in(\mathbb{F}_{q}^{*})^{n}$ such that $C_{2}=\varphi_{\pi,\bm v}(C_{1})$ where $\varphi_{\pi,\bm v}$ is
$$
\varphi_{\pi,\bm v}:\mathbb{F}_{q}^{n}\rightarrow\mathbb{F}_{q}^{n},~(c_{1},\dots,c_{n})\mapsto(v_{1}c_{\pi(1)},\dots,v_{n}c_{\pi(n)}).
$$
\end{Definition}
It is easy to see that $C_{1}$ and $C_{2}$ have same parameters, and a code is a GRS code if and only if it is equivalent to an RS code.

\section{New explicit constructions of MDS TRS codes}
We know that the generalized twisted Reed-Solomon code $GTRS_{k,n}[\bm\alpha,\bm t,\bm h,\bm\eta,\bm v]$ is equivalent to the twisted Reed-Solomon code $TRS_{k,n}[\bm\alpha,\bm t,\bm h,\bm \eta]$. In this section, we give some explicit constructions of MDS TRS codes. And all results in this section can be generalized to GTRS codes.

 Let $K$ be a group, $H$ be a subgroup of $K$ and $a\in K$. Then the left (right) coset of $a$ with respect to $H$ is the set $aH=\{ah : h\in H\}$ ($Ha=\{ha : h\in H\}$). The index of $H$ in $K$, denoted $[K : H]$, is the number of left (right) cosets of $H$ in $K$. If $aH=Ha$ for all $a\in K$, then we say that $H$ is a normal subgroup of $K$. If $K$ is abelian, then all subgroup of $K$ is normal subgroup. The quotient group $K/H$ is the set of all left cosets $aH$ denoted by $\overline{a}$, with $a\in K$, under the operation $(aH)(bH)=abH$. For a group $K$, $|K|$ is called the order of $K$.

In~\cite{P. Beelen}, the authors have studied the cases of $\ell=1$ and $\bm v=\bm 1$. They given two explicit constructions of MDS TRS codes with the case $(t,h)=(1,0)$ which they called $(\ast)$-twisted codes and the case $(t,h)=(1,k-1)$ which they called $(+)$-twisted codes that contain many non-GRS codes. In this section, we generalize their constructions, and give two ways to increase the code length correspond to the cases $(t,h)=(1,0)$ and $(t,h)=(1,k-1)$ respectively. With notation in Section $2$, we present three useful results, which have been shown in \cite{P. Beelen}. Throughout this section, let $\ell=1$.

\begin{Lemma}\label{lem-1}\cite{P. Beelen}
Let $(t,h)=(1,0)$ and let $k<n$, $\alpha_{1},\alpha_{2},...,\alpha_{n}\in\mathbb{F}_{q}$ distinct and $\eta\in\mathbb{F}_{q}$. Then the twisted code $TRS_{k,n}[\bm\alpha,1,0,\eta]$ is MDS if and only if
$$
\eta(-1)^{k}\prod_{i\in \mathcal{I}}\alpha_{i}\neq 1,~\forall~\mathcal{I}\subseteq \{1,...,n\}~\text s.t.~|\mathcal{I}|=k.
$$
\end{Lemma}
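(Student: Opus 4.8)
The plan is to use the standard criterion that a linear $[\,n,k\,]$ code with generator matrix $G$ is MDS if and only if every $k\times k$ submatrix of $G$ obtained by choosing $k$ of its columns is nonsingular. So I would fix a subset $\mathcal{I}=\{i_{1}<i_{2}<\cdots<i_{k}\}\subseteq\{1,\dots,n\}$ and analyse the determinant of the corresponding $k\times k$ submatrix $G_{\mathcal{I}}$ of the generator matrix $G_{\bm\alpha,1,0,\eta}$. With $\ell=1$ and $(t,h)=(1,0)$, the first row of $G_{\bm\alpha,1,0,\eta}$ is $(1+\eta\alpha_{1}^{k},\dots,1+\eta\alpha_{n}^{k})$ and rows $2,\dots,k$ are the ordinary Vandermonde rows $(\alpha_{1}^{j},\dots,\alpha_{n}^{j})$ for $j=1,\dots,k-1$.

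First I would expand $\det(G_{\mathcal{I}})$ along its first row by multilinearity, writing $(1+\eta\alpha_{i_{j}}^{k})_{j}=(1)_{j}+\eta(\alpha_{i_{j}}^{k})_{j}$, so that $\det(G_{\mathcal{I}})=\det V+\eta\det W$, where $V$ is the Vandermonde matrix on the nodes $\alpha_{i_{1}},\dots,\alpha_{i_{k}}$ (row exponents $0,1,\dots,k-1$) and $W$ is the matrix on the same nodes whose row exponents are $k,1,2,\dots,k-1$. Next I would evaluate $\det W$: permuting the rows of $W$ so that the exponents appear in increasing order $1,2,\dots,k$ is a cyclic shift of $k$ rows, contributing the sign $(-1)^{k-1}$; then pulling $\alpha_{i_{j}}$ out of the $j$-th column of the reordered matrix leaves exactly $V$. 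This gives $\det W=(-1)^{k-1}\bigl(\prod_{i\in\mathcal{I}}\alpha_{i}\bigr)\det V$, hence
$$
\det(G_{\mathcal{I}})=\det V\Bigl(1+\eta(-1)^{k-1}\prod_{i\in\mathcal{I}}\alpha_{i}\Bigr)=\det V\Bigl(1-\eta(-1)^{k}\prod_{i\in\mathcal{I}}\alpha_{i}\Bigr).
$$

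Finally, since the $\alpha_{i}$ are pairwise distinct, $\det V\neq 0$, so $G_{\mathcal{I}}$ is nonsingular if and only if $\eta(-1)^{k}\prod_{i\in\mathcal{I}}\alpha_{i}\neq 1$; letting $\mathcal{I}$ range over all $k$-subsets of $\{1,\dots,n\}$ yields the stated equivalence. (The identity above is still valid when some $\alpha_{i}=0$ lies in $\mathcal{I}$, in which case $\det W=0$ and the inequality holds automatically.) I do not expect a genuine obstacle here: the only point demanding care is the sign bookkeeping, namely the $(-1)^{k-1}$ coming from the cyclic row permutation together with the column rescaling that turns $W$ into a Vandermonde matrix.
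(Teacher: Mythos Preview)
Your argument is correct: the MDS criterion via nonsingularity of all $k\times k$ column submatrices, the multilinear split of the first row into the Vandermonde part $V$ and the twisted part $W$, and the evaluation $\det W=(-1)^{k-1}\bigl(\prod_{i\in\mathcal{I}}\alpha_{i}\bigr)\det V$ via the cyclic row shift followed by column rescaling are all sound, and the sign bookkeeping is right.

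Note, however, that in this paper Lemma~\ref{lem-1} is not proved at all: it is quoted verbatim from \cite{P. Beelen} (Beelen--Puchinger--Rosenkilde~n\'{e}~Nielsen) and used as a black box in the proofs of Theorems~\ref{the-2} and~\ref{the-3}. So there is no ``paper's own proof'' to compare against. Your approach is the natural one and matches the standard determinant computation one would expect in the original reference.
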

\begin{Lemma}\label{lem-2}\cite{P. Beelen}
Let $k<n$, $(t,h)=(1,k-1)$ and let $\alpha_{1},\alpha_{2},...,\alpha_{n}\in\mathbb{F}_{q}$ distinct and $\eta\in\mathbb{F}_{q}$. Then the twisted code $TRS_{k,n}[\bm\alpha,1,k-1,\eta]$ is MDS if and only if
$$
\eta\sum_{i\in \mathcal{I}}\alpha_{i}\neq -1,~\forall~\mathcal{I}\subseteq \{1,...,n\}~\text s.t.~|\mathcal{I}|=k.
$$
\end{Lemma}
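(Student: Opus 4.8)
The plan is to invoke the classical characterization (see~\cite{R. Roth}) that an $[n,k]$ linear code is MDS if and only if every $k$ columns of one of its generator matrices are linearly independent, and to apply it to the generator matrix $G_{\bm\alpha,1,k-1,\eta}$ displayed above with $\ell=1$. Here the twist sits in the last ($i=k-1$) row, so for a $k$-subset $\mathcal{I}=\{i_{1}<\dots<i_{k}\}\subseteq\{1,\dots,n\}$ the associated $k\times k$ submatrix $G_{\mathcal{I}}$ has $s$-th column $(1,\alpha_{i_{s}},\dots,\alpha_{i_{s}}^{k-2},\,\alpha_{i_{s}}^{k-1}+\eta\alpha_{i_{s}}^{k})^{T}$. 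First I would note that, since a nonzero polynomial in $\mathcal{P}_{k,n}[1,k-1,\eta]$ has degree at most $k\le n-1<n$, the evaluation map is injective on $\mathcal{P}_{k,n}[1,k-1,\eta]$, so $G_{\bm\alpha,1,k-1,\eta}$ really is a generator matrix and it suffices to determine when $\det G_{\mathcal{I}}\neq 0$ for every such $\mathcal{I}$.

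Next I would compute $\det G_{\mathcal{I}}$ by expanding along the last row via multilinearity: $\det G_{\mathcal{I}}=\det V_{\mathcal{I}}+\eta\,\det W_{\mathcal{I}}$, where $V_{\mathcal{I}}$ is the ordinary Vandermonde matrix in $\alpha_{i_{1}},\dots,\alpha_{i_{k}}$ (row exponents $0,1,\dots,k-1$) and $W_{\mathcal{I}}$ is the generalized Vandermonde matrix with row exponents $0,1,\dots,k-2,k$. One has $\det V_{\mathcal{I}}=\prod_{1\le s<t\le k}(\alpha_{i_{t}}-\alpha_{i_{s}})$, which is nonzero because the $\alpha_{i}$ are pairwise distinct. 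The key step is evaluating $\det W_{\mathcal{I}}$: the exponent tuple $(0,1,\dots,k-2,k)$ exceeds the staircase $(0,1,\dots,k-1)$ by the partition $(1,0,\dots,0)$, so the bialternant (Schur) formula gives $\det W_{\mathcal{I}}=s_{(1)}(\alpha_{i_{1}},\dots,\alpha_{i_{k}})\cdot\prod_{s<t}(\alpha_{i_{t}}-\alpha_{i_{s}})=\big(\sum_{i\in\mathcal{I}}\alpha_{i}\big)\prod_{s<t}(\alpha_{i_{t}}-\alpha_{i_{s}})$, since the Schur polynomial of a single box is the first elementary symmetric polynomial. (If one prefers to avoid symmetric-function machinery, the same identity drops out of a short cofactor expansion or an easy induction on $k$.) Combining the two pieces yields
$$
\det G_{\mathcal{I}}=\Big(1+\eta\sum_{i\in\mathcal{I}}\alpha_{i}\Big)\prod_{1\le s<t\le k}(\alpha_{i_{t}}-\alpha_{i_{s}}).
$$

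Finally, since the Vandermonde factor never vanishes, $G_{\mathcal{I}}$ is nonsingular precisely when $1+\eta\sum_{i\in\mathcal{I}}\alpha_{i}\neq 0$, i.e. $\eta\sum_{i\in\mathcal{I}}\alpha_{i}\neq -1$; letting $\mathcal{I}$ range over all $k$-subsets of $\{1,\dots,n\}$ gives both directions of the stated equivalence at once. I expect the only genuine obstacle to be the evaluation of the generalized Vandermonde determinant $\det W_{\mathcal{I}}$ — everything else is bookkeeping — and that is where I would be most careful about signs and ordering conventions. The very same scheme proves Lemma~\ref{lem-1}: there the twist is in the first row, the relevant determinant splits as $\big(1-\eta(-1)^{k}\prod_{i\in\mathcal{I}}\alpha_{i}\big)\prod_{s<t}(\alpha_{i_{t}}-\alpha_{i_{s}})$, the cyclic reshuffle of the exponent tuple $(k,1,2,\dots,k-1)$ producing the sign $(-1)^{k-1}$ together with one power of each $\alpha_{i}$, so nonsingularity is equivalent to $\eta(-1)^{k}\prod_{i\in\mathcal{I}}\alpha_{i}\neq 1$.
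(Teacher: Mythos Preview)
Your argument is correct. The paper does not supply its own proof of this lemma: it is quoted verbatim from~\cite{P. Beelen} and used as a black box in the proofs of Theorems~\ref{the-4} and~\ref{the-5}. So there is nothing to compare against here; your determinant computation via multilinearity in the last row and the bialternant identity $\det W_{\mathcal{I}}=\big(\sum_{i\in\mathcal{I}}\alpha_i\big)\prod_{s<t}(\alpha_{i_t}-\alpha_{i_s})$ is exactly the standard route, and your parenthetical derivation of Lemma~\ref{lem-1} by the same device (cycling the exponent tuple $(k,1,\dots,k-1)$ and extracting one factor of each $\alpha_i$) is also right.
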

\begin{Lemma}\label{lem-3}\cite{P. Beelen}
A linear code with generator matrix $G=[I:A]$ is a GRS code if and only if

(A) All entries of $A$ are nonzero.

(B) All $2\times2$ minors of $\widetilde{A}$ are nonzero, and

(C) All $3\times3$ minors of $\widetilde{A}$ are zero.

where $\widetilde{A}$ is given by $\widetilde{A}_{ij}=A^{-1}_{ij}$.
\end{Lemma}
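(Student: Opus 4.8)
The lemma is a characterization of the matrices $A$ for which the systematic code $[I_k\,|\,A]$ is GRS, and the plan is to identify these as exactly the \emph{generalized Cauchy matrices}: those of the form $A_{ij}=a_ib_j/(x_i-y_j)$ with $a_i,b_j\in\mathbb{F}_q^{*}$ and with elements $x_1,\dots,x_k,y_1,\dots,y_{n-k}\in\mathbb{F}_q$ such that the $x_i$ are pairwise distinct, the $y_j$ are pairwise distinct, and $x_i\neq y_j$ for all $i,j$. So I would split the work into (I) ``$[I_k\,|\,A]$ is GRS $\iff$ $A$ is a generalized Cauchy matrix'' and (II) ``a matrix with no zero entry is a generalized Cauchy matrix $\iff$ its entrywise inverse satisfies (B) and (C)''. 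Granting (I) and (II), the lemma follows, since condition (A) is precisely the statement that $\widetilde A$ is defined and has no zero entry, which is what is needed for (B) and (C) to make sense.

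For (I): a GRS code is MDS, hence every $k$ coordinates form an information set, so a GRS code does possess a (unique) systematic generator $[I_k\,|\,A]$. After a fractional‑linear transformation of the locator vector — under which the family of GRS codes is invariant — to place the locator $\infty$, if present, outside the first $k$ coordinates, Lagrange interpolation of a polynomial of degree $<k$ from its values at $\alpha_1,\dots,\alpha_k$, evaluated at $\alpha_{k+1},\dots,\alpha_n$, yields
\[
A_{ij}=\frac{v_{k+j}}{v_i}\prod_{\substack{1\le l\le k\\ l\neq i}}\frac{\alpha_{k+j}-\alpha_l}{\alpha_i-\alpha_l}=\frac{a_i\,b_j}{\alpha_{k+j}-\alpha_i},
\]
with $a_i=\bigl(v_i\prod_{l\neq i}(\alpha_i-\alpha_l)\bigr)^{-1}$ and $b_j=v_{k+j}\prod_{l\le k}(\alpha_{k+j}-\alpha_l)$ both nonzero (the locators being distinct); so $A$ is generalized Cauchy. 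Conversely, given a generalized Cauchy $A$, I would take locators $\alpha_i:=x_i$ $(i\le k)$ and $\alpha_{k+j}:=y_j$ $(j\le n-k)$ — these are $n$ distinct elements — choose the multipliers manufactured from the $a_i,b_j$ via the displayed formula, and run the same interpolation identity backwards to verify that this GRS code, in systematic form, has exactly this $A$.

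For (II), the implication ``generalized Cauchy $\Rightarrow$ (B),(C)'' is a short computation: $\widetilde A_{ij}=a_i^{-1}b_j^{-1}(x_i-y_j)$ exhibits $\widetilde A$ as the ``difference matrix'' $M_{ij}=x_i-y_j$ with rows scaled by $a_i^{-1}$ and columns by $b_j^{-1}$; since $M=\bm x\mathbf{1}^{T}-\mathbf{1}\bm y^{T}$ has rank $\le 2$ its $3\times3$ minors vanish, and scaling preserves this, giving (C); and the $2\times2$ minor of $M$ on rows $i,i'$, columns $j,j'$ equals $(x_i-x_{i'})(y_{j'}-y_j)\neq0$, giving (B). The converse ``(B),(C) $\Rightarrow$ generalized Cauchy'' is the crux. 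I would normalize $\widetilde A$ by dividing each row by its first entry and each column by its (new) first entry, producing a matrix $N$ with first row and first column all $1$'s and the same vanishing pattern for $2\times2$ and $3\times3$ minors; expanding the $3\times3$ minor of $N$ on rows $\{1,i,i'\}$, columns $\{1,j,j'\}$ gives $(N_{ij}-1)(N_{i'j'}-1)-(N_{ij'}-1)(N_{i'j}-1)$, so by (C) the matrix $P:=[N_{ij}-1]$ has all $2\times2$ minors zero, i.e.\ $\operatorname{rank}P\le1$, and by (B) it is nonzero with nonzero off‑border entries, hence $P_{ij}=u_iv_j$ with $u_1=v_1=0$ and $u_i,v_j\neq0$ for $i,j\ge2$. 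Unwinding the normalization gives $\widetilde A_{ij}=r_ic_j(1+u_iv_j)$ with $r_i,c_j\neq0$ coming from the normalization; a further use of (B) — an arbitrary $2\times2$ minor of $\widetilde A$ is $r_ir_{i'}c_jc_{j'}(u_i-u_{i'})(v_j-v_{j'})$ — forces the $u_i$ pairwise distinct and the $v_j$ pairwise distinct; inverting $\widetilde A$ entrywise then presents $A$ in generalized Cauchy form with left and right parameters suitable scalings of the $u_i$ and the $v_j$.

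The step I expect to be the main obstacle is exactly this last one: converting ``$\operatorname{rank}\widetilde A\le2$ with all $2\times2$ minors nonzero'' into the rigid ``difference matrix up to two‑sided diagonal scaling'' shape. The normalization that pins the first row and column to $1$'s is what turns the rank‑$2$ hypothesis into a rank‑$1$ statement about $P=N-\mathbf{1}\mathbf{1}^{T}$, and (B) has to be invoked twice — once so that the rank‑$1$ factorization $P=uv^{T}$ has nonvanishing factors, and once more to force distinctness of the recovered parameters. I would also spend a sentence on the degenerate sizes $k=1$ or $n-k=1$, where (B) (and (C)) is vacuous and $[I_k\,|\,A]$ with $A$ having all entries nonzero is directly seen to be GRS, since it then has dimension or codimension $1$; and on the $\infty$‑locator convention of Definition~\ref{def-1}, handled by the fractional‑linear reduction used above.
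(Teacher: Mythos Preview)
The paper does not give its own proof of this lemma: it is quoted verbatim from \cite{P. Beelen} and used as a black box to derive Theorem~\ref{theorem-1}. So there is nothing in the paper to compare your argument against.

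That said, your plan is essentially the standard Roth--Seroussi route and is correct. Part (I) is exactly the Lagrange-interpolation computation that identifies the systematic part of a GRS code with a generalized Cauchy matrix, and your treatment of the $\infty$ locator via a fractional-linear move is the right way to avoid case splits. In part (II) the forward direction is immediate, and your normalization trick for the converse --- reducing (C) on $\widetilde A$ to a rank-$1$ statement about $P=N-\mathbf{1}\mathbf{1}^{T}$, then using (B) twice to get nonvanishing and distinctness --- is clean and correct.

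One small point to tighten in the final sentence of (II): from $N_{ij}=1+u_iv_j$ you do not get a difference matrix by mere \emph{scalings} of the $u_i$ and $v_j$; you need a fractional-linear reparametrization. Concretely, choose $\lambda\in\mathbb{F}_q^{*}$ avoiding the finitely many values $\{-u_i\}\cup\{v_j^{-1}\}$ and set $x_i=u_i/(u_i+\lambda)$, $y_j=1/(1-\lambda v_j)$; a two-line check gives $1+u_iv_j=s_it_j(x_i-y_j)$ with nonzero $s_i,t_j$, and $x_i=y_j$ would force $N_{ij}=0$, contradicting (A). With that adjustment your plan goes through. You should also record that for $k\le 2$ or $n-k\le 2$ condition (C) is vacuous, and the statement reduces to the easy fact that every MDS code of dimension or codimension at most $2$ is GRS; you already flagged the $1$-dimensional case, and the $2$-dimensional case follows from (A)$+$(B) alone by the same generalized-Cauchy identification.
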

From Lemma~\ref{lem-3}, the authors in \cite{P. Beelen} got the following result by calculating the minors.
\begin{Theorem}\label{theorem-1}\cite{P. Beelen}
Let $\alpha_{1},\alpha_{2},...,\alpha_{n}\in\mathbb{F}_{q}$ distinct and $2<k<n-2$. Furthermore, let $\mathcal{H}\subseteq \mathbb{F}_{q}$ satisfy that the twisted code $TRS_{k,n}[\bm\alpha,t,h,\eta]$ is MDS for all $\eta\in\mathcal{H}$. Then there are at most 6 choices of $\eta\in\mathcal{H}$ such that $TRS_{k,n}[\bm\alpha,t,h,\eta]$ is equivalent to an RS code.
\end{Theorem}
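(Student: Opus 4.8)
The plan is to use Lemma~\ref{lem-3}: a code written in systematic form $[\,I_k\mid A\,]$ can be equivalent to an RS code only if \emph{all} $3\times 3$ minors of the entrywise reciprocal $\widetilde{A}$ (where $\widetilde{A}_{ij}=A^{-1}_{ij}$) vanish. So it suffices to produce a single $3\times 3$ minor of $\widetilde{A}$ whose vanishing, viewed as a condition on $\eta$, cuts out at most $6$ values; for $\eta\in\mathcal{H}$ the code is MDS, so $A$ really is in systematic form and has all entries nonzero, and the minor is a well-defined element of $\mathbb{F}_q$.

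First I would put $G_{\bm\alpha,t,h,\eta}$ into systematic form. Because $\ell=1$, this matrix agrees with the Reed--Solomon generator matrix $G^{(0)}:=G_{\bm\alpha,t,h,0}$ in every row except the one indexed by $h$, and that row depends linearly on $\eta$; hence $G_{\bm\alpha,t,h,\eta}=G^{(0)}+\eta\,\bm e_h\,(\alpha_1^{k-1+t},\dots,\alpha_n^{k-1+t})$ is a rank-one perturbation of $G^{(0)}$, with $\bm e_h\in\mathbb{F}_q^k$ the standard basis vector of the twisted row. For $\eta\in\mathcal{H}$ the first $k$ columns $G_1$ are independent (the code is MDS), and $G_1$ is a rank-one perturbation of the Vandermonde matrix $V=V(\alpha_1,\dots,\alpha_k)$; applying the Sherman--Morrison formula to $G_1^{-1}$ and simplifying yields a systematic form $[\,I_k\mid A(\eta)\,]$ with
$$
A(\eta)=A^{(0)}+\frac{\eta}{1+\eta\beta}\,\bm c\,\bm d^{T},
$$
where $A^{(0)}=V^{-1}G^{(0)}_2$ is the $A$-block of the underlying RS code (a generalized Cauchy matrix, $A^{(0)}_{ij}=\lambda_i\mu_j/(\alpha_i-\alpha_{k+j})$ for suitable nonzero $\lambda_i,\mu_j$), $\beta\in\mathbb{F}_q$ is a fixed scalar, and $\bm c,\bm d$ are fixed vectors, all independent of $\eta$. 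Thus $A_{ij}(\eta)=Q_{ij}(\eta)/(1+\eta\beta)$ with $Q_{ij}(\eta)=A^{(0)}_{ij}(1+\eta\beta)+\eta\,c_id_j$ of degree $\le 1$, and for $\eta\in\mathcal{H}$ both $1+\eta\beta$ and $Q_{ij}(\eta)$ are nonzero, so $\widetilde{A}_{ij}(\eta)=(1+\eta\beta)/Q_{ij}(\eta)$.

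Next comes the degree count. Writing $P(\eta)=1+\eta\beta$, for any three rows $i_1,i_2,i_3$ and three columns $j_1,j_2,j_3$ one has
$$
\det\!\big[\widetilde{A}_{i_aj_b}(\eta)\big]_{a,b=1}^{3}=\frac{P(\eta)^{3}\,M(\eta)}{\prod_{a,b}Q_{i_aj_b}(\eta)},\qquad M(\eta)=\sum_{\sigma\in S_3}\operatorname{sgn}(\sigma)\prod_{a=1}^{3}\prod_{b\neq\sigma(a)}Q_{i_aj_b}(\eta),
$$
and since each inner product $\prod_{b\neq\sigma(a)}Q_{i_aj_b}$ is a product of two polynomials of degree $\le 1$, the polynomial $M$ has degree at most $6$. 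For $\eta\in\mathcal{H}$ the factors $P(\eta)$ and $Q_{i_aj_b}(\eta)$ are nonzero, so this minor vanishes precisely when $M(\eta)=0$. Hence, once we exhibit rows and columns with $M\not\equiv 0$, Lemma~\ref{lem-3} forces: $TRS_{k,n}[\bm\alpha,t,h,\eta]$ equivalent to an RS code $\Rightarrow M(\eta)=0$, which holds for at most $\deg M\le 6$ values of $\eta$, giving the claim.

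The main obstacle is exactly that last point: showing $M$ is not the zero polynomial for some admissible choice of rows and columns, which is where the hypothesis $2<k<n-2$ enters (it guarantees $\widetilde{A}$ has at least three rows and three columns). I expect the cleanest route is a direct computation: since $\widetilde{A}(0)$ is the entrywise reciprocal of the RS generalized-Cauchy block $A^{(0)}$, it has rank $\le 2$, so $M(0)=0$; write $M(\eta)=\eta\,M_1(\eta)$ and compute $M_1(0)$ from the explicit Cauchy form of $A^{(0)}$ together with the explicit vectors $\bm c=V^{-1}\bm e_h$ and $\bm d$. The contributions proportional to $\beta$ cancel (each amounts to a Laplace expansion of the singular matrix $\widetilde{A}(0)$), and what remains can be arranged, by a judicious choice of the three rows and columns, into a product of Vandermonde/Cauchy determinants times a factor coming from the twist; the former are nonzero because the $\alpha_i$ are distinct, and the perturbation $\bm c\,\bm d^{T}$ is genuinely nonzero because $t\ge 1$ prevents $x^{k-1+t}$ from agreeing with its degree-$<k$ interpolant at more than $k-1+t<n$ of the $\alpha_i$. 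Carrying out this bookkeeping uniformly in $(t,h)$ — in particular verifying that no unexpected cancellation annihilates $M_1(0)$ for \emph{every} admissible choice of rows and columns — is the technical heart of the argument; everything else is the routine linear algebra above.
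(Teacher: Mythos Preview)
The paper does not supply its own proof of this theorem: it is quoted from \cite{P. Beelen}, with the one-line remark that the result follows from Lemma~\ref{lem-3} ``by calculating the minors.'' Your overall strategy --- pass to systematic form, use that $A(\eta)$ is a rank-one perturbation of the Reed--Solomon Cauchy block, and bound the degree in $\eta$ of a single $3\times 3$ minor of $\widetilde{A}$ --- is exactly the kind of minor computation indicated, and your Sherman--Morrison reduction to $A(\eta)=A^{(0)}+\tfrac{\eta}{1+\eta\beta}\,\bm c\,\bm d^{T}$ with $Q_{ij}$ linear in $\eta$ and $\deg M\le 6$ is correct.

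The genuine gap is the nonvanishing of $M$. You correctly isolate this as ``the technical heart,'' but what you actually supply is only a sketch: you note $M(0)=0$ because $\widetilde{A}(0)$ has rank $\le 2$, propose to compute $M_1(0)$, assert that the $\beta$-terms cancel and that the remainder factors as Vandermonde/Cauchy determinants times a twist contribution, and then defer the verification that no cancellation kills $M_1(0)$ for \emph{every} admissible triple of rows and columns. That deferred step is precisely where the work lies; without it the argument is not a proof but a plan. In particular, your nonzeroness argument for $\bm d$ only shows some $d_j\ne 0$ (in fact at most $t-1$ of them can vanish), and $\bm c=V^{-1}\bm e_h$ can have zero entries, so one must actually choose the three rows and three columns with care and then carry out the determinant computation explicitly to see that $M_1(0)\ne 0$. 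Until that computation is written down, the bound of $6$ remains conditional.
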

Let $G$ be a proper subgroup of $(\mathbb{F}_{q}^{*},\cdot)$, then $G$ is a cyclic group. Let $\mathbb{F}_q^*/G=\{\overline{a_{1}},\overline{a_{2}},...,\overline{a_{s}}\}$ be the quotient group with $\{a_{1},a_{2},...,a_{s}\}$ being a representation elements of cosets. For any $a\in \mathbb{F}_q^*$, let  $\overline{a}=aG:=\{ag:g\in G\}$ be a coset. Then we have the following result.
\begin{Theorem}\label{the-2}
Let $\overline{H}=\{\overline{a_{i_{1}}},\overline{a_{i_{2}}},\dots,\overline{a_{i_{r}}}\},r\geq1$ be a proper subgroup of $\mathbb{F}_{q}^{*}/G$ with order $r$, and let $(t,h)=(1,0), 1\leq k<n, \{\alpha_{1},\alpha_{2},...,\alpha_{n}\}$ be a subset of $\cup_{j=1}^{r}a_{i_{j}}G\bigcup\{0\}$. If $(-1)^{k}\eta\in \mathbb{F}_{q}^{*}\setminus\cup_{j=1}^{r}a_{i_{j}}G $, then $TRS_{k,n}[\bm\alpha,1,0,\eta]$ is an MDS code.
\end{Theorem}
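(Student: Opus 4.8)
The plan is to reduce everything to the combinatorial MDS criterion of Lemma~\ref{lem-1} and then exploit the group structure hidden in the hypothesis. By Lemma~\ref{lem-1}, $TRS_{k,n}[\bm\alpha,1,0,\eta]$ is MDS if and only if
$$\eta(-1)^{k}\prod_{i\in\mathcal{I}}\alpha_{i}\neq 1 \quad\text{for every }\mathcal{I}\subseteq\{1,\dots,n\}\text{ with }|\mathcal{I}|=k,$$
so it suffices to verify this inequality for an arbitrary $k$-subset $\mathcal{I}$.

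First I would set $N:=\bigcup_{j=1}^{r}a_{i_{j}}G$ and observe that $N$ is exactly the preimage $\pi^{-1}(\overline{H})$ of the subgroup $\overline{H}\leq\mathbb{F}_{q}^{*}/G$ under the canonical projection $\pi:\mathbb{F}_{q}^{*}\to\mathbb{F}_{q}^{*}/G$. Since the preimage of a subgroup under a group homomorphism is itself a subgroup, $N$ is a subgroup of $\mathbb{F}_{q}^{*}$; concretely, $N$ is just the union of those cosets of $G$ that are listed as elements of $\overline{H}$, so it is closed under products and under inversion. This structural observation is the point on which the whole argument rests.

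Next, fix any $\mathcal{I}$ with $|\mathcal{I}|=k$ and split into two cases. If $\alpha_{i}=0$ for some $i\in\mathcal{I}$, then $\prod_{i\in\mathcal{I}}\alpha_{i}=0$, hence $\eta(-1)^{k}\prod_{i\in\mathcal{I}}\alpha_{i}=0\neq 1$ and the inequality holds trivially. Otherwise every $\alpha_{i}$ with $i\in\mathcal{I}$ is a nonzero element of $\bigcup_{j=1}^{r}a_{i_{j}}G=N$, and since $N$ is a subgroup, the product $\prod_{i\in\mathcal{I}}\alpha_{i}$ lies in $N$ as well.

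Finally, the hypothesis $(-1)^{k}\eta\in\mathbb{F}_{q}^{*}\setminus N$ together with closure of $N$ under inversion forces $\big((-1)^{k}\eta\big)^{-1}\notin N$. Since $\prod_{i\in\mathcal{I}}\alpha_{i}\in N$, we conclude $\prod_{i\in\mathcal{I}}\alpha_{i}\neq\big((-1)^{k}\eta\big)^{-1}$, which is equivalent to $\eta(-1)^{k}\prod_{i\in\mathcal{I}}\alpha_{i}\neq 1$. As $\mathcal{I}$ was arbitrary, Lemma~\ref{lem-1} gives that the code is MDS. The argument is elementary; the only place that genuinely requires care is recognizing that the union of cosets appearing in the hypothesis forms a subgroup of $\mathbb{F}_{q}^{*}$ — once that is in hand, closure under multiplication and inversion finishes the proof. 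The same scheme carries over verbatim to the $GTRS$ analogue, since scaling coordinates by the nonzero $v_{i}$ does not affect the MDS property.
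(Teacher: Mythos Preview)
Your proof is correct and follows essentially the same approach as the paper's: both invoke Lemma~\ref{lem-1}, dispose of the case where some $\alpha_i=0$ trivially, and then use that the union of cosets $\cup_{j}a_{i_j}G$ is closed under multiplication (because $\overline{H}$ is a subgroup) to derive the required inequality. The only cosmetic difference is that the paper argues by contradiction in the quotient $\mathbb{F}_q^*/G$, whereas you argue directly with the preimage subgroup $N=\pi^{-1}(\overline{H})\leq\mathbb{F}_q^*$; these are the same argument in two equivalent phrasings.
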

\begin{proof} Suppose that the code $TRS_{k,n}[\bm\alpha,1,0,\eta]$ is not MDS, then by Lemma~\ref{lem-1}, there exists $\mathcal{I}\subseteq\{1,2,...,n\}$ such that $\eta(-1)^{k}\prod_{i\in \mathcal{I}}\alpha_{i}= 1$, which implies that
$$
\overline{\eta(-1)^{k}\prod_{i\in \mathcal{I}}\alpha_{i}}=\overline{\eta(-1)^{k}}\overline{\prod_{i\in \mathcal{I}}\alpha_{i}}= \overline{1}.
$$
If $\alpha_{j}=0$ for some $j$, then it is a contradiction. Thus $\{0\}$ is not contained in $\{\alpha_{1},\alpha_{2},...,\alpha_{n}\}$.
Since all $\alpha_{i}$ are contained in $\cup_{j=1}^{r}a_{i_{j}}G$, and $\overline{H}$ is a group, we have $\overline{\prod_{i\in\mathcal{I}}\alpha_{i}}\in \overline{H}$, implying that $\overline{(-1)^{k}\eta}\in \overline{H}$. This also leads to a contradiction. We finish the proof.
\end{proof}

Let the notions be as in Theorem~\ref{the-2}, then by Theorem~\ref{theorem-1}, we have the following result.
\begin{Corollary}\label{cor-1}
If $|\mathbb{F}_{q}^{*}\setminus\cup_{j=1}^{r}a_{i_{j}}G|>6$, then for any $n,k$ with $2<k<n-2$ and $n\leq |\cup_{j=1}^{r}a_{i_{j}}G|$ there exists a non-GRS MDS $TRS_{k,n}[\bm\alpha,1,0,\eta]$ code.
\end{Corollary}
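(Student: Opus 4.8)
The plan is to feed the output of Theorem~\ref{the-2} into Theorem~\ref{theorem-1}. First I would fix the locators: since $n\le|\cup_{j=1}^{r}a_{i_{j}}G|$, choose pairwise distinct $\alpha_{1},\dots,\alpha_{n}$ lying inside $\cup_{j=1}^{r}a_{i_{j}}G$ (the extra point $0$ permitted in Theorem~\ref{the-2} is not even needed here). Then I would consider the set of admissible twist parameters
$$
\mathcal{H}:=\{\,\eta\in\mathbb{F}_{q}^{*}\ :\ (-1)^{k}\eta\in\mathbb{F}_{q}^{*}\setminus\cup_{j=1}^{r}a_{i_{j}}G\,\}.
$$
Because the map $x\mapsto(-1)^{k}x$ is a bijection of $\mathbb{F}_{q}^{*}$, we get $|\mathcal{H}|=|\mathbb{F}_{q}^{*}\setminus\cup_{j=1}^{r}a_{i_{j}}G|>6$, and by Theorem~\ref{the-2} every $\eta\in\mathcal{H}$ makes $TRS_{k,n}[\bm\alpha,1,0,\eta]$ an MDS code.

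Next I would apply Theorem~\ref{theorem-1} to exactly this $\mathcal{H}$. The hypotheses needed there---namely $2<k<n-2$ and the MDS property of $TRS_{k,n}[\bm\alpha,1,0,\eta]$ for every $\eta\in\mathcal{H}$---are precisely what has just been arranged, so at most $6$ values $\eta\in\mathcal{H}$ can give a code equivalent to an RS code. Since $|\mathcal{H}|>6$, some $\eta\in\mathcal{H}$ yields a code that is simultaneously MDS and not equivalent to an RS code; recalling that a linear code is a GRS code if and only if it is equivalent to an RS code, this code is a non-GRS MDS $TRS_{k,n}[\bm\alpha,1,0,\eta]$, which is the assertion.

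Since the statement is essentially a counting argument layered on the two quoted results, I do not anticipate a genuine obstacle. The two points to handle with care are: (i) that the locator set really fits inside $\cup_{j=1}^{r}a_{i_{j}}G$, which is immediate from the hypothesis $n\le|\cup_{j=1}^{r}a_{i_{j}}G|$; and (ii) that the set $\mathcal{H}$ used when invoking Theorem~\ref{theorem-1} is literally the set on which MDS-ness was established via Theorem~\ref{the-2}, so that its bound of $6$ exceptional twists applies verbatim. The strict inequality $|\mathbb{F}_{q}^{*}\setminus\cup_{j=1}^{r}a_{i_{j}}G|>6$ provides exactly the slack that makes the counting conclusive.
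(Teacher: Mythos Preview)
Your proposal is correct and follows exactly the route the paper indicates: the paper itself offers no written proof beyond noting that the corollary is an immediate consequence of Theorem~\ref{the-2} and Theorem~\ref{theorem-1}, and you have simply made that deduction explicit by exhibiting the locator set and the admissible set $\mathcal{H}$ of twist parameters. The counting via the bijection $x\mapsto(-1)^{k}x$ and the invocation of the ``at most $6$'' bound are precisely the intended argument.
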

\begin{Remark}
The construction of Theorem~\ref{the-2} is $(\ast)$-twisted codes in \cite{P. Beelen} when $G=\{1\}$.
\end{Remark}
\begin{Remark}
From the result, the code length can reach $r|G|+1$. If the quotient group $\mathbb{F}_{q}^{*}/G$ has proper non-trivial subgroup, then our construction increases the code length. However, it is easy to see that $r|G| | |\mathbb{F}_{q}^{*}|$ and $r|G| < |\mathbb{F}_{q}^{*}|$. Since $\mathbb{F}_{q}^{*}$ is a cyclic group, then for any divisor $c$ of $|\mathbb{F}_{q}^{*}|$ there exists a unique subgroup of order $c$. Then there exists a proper subgroup of $\mathbb{F}_{q}^{*}$ with order $r|G|$. So Theorem~\ref{the-2} can not increase the maximal length which $(\ast)$-twisted codes reached.
\end{Remark}
In coding theory, we always want to construct an MDS code with a longer code length. In \cite{P. Beelen}, when $(t,h)=(1,0)$ the code length of $(\ast)$-twisted codes can reach $t+1$ where $t$ is the maximal proper divisor of $q-1$. In the following, we give a construction which increases the maximal length that $(\ast)$-twisted codes reached for the case $(t,h)=(1,0)$ and $q$ is even.

Before giving the following result, we simply state some facts about integer $2^{m}-1$. We know that if $m$ is a composite number, then $2^{m}-1$ is also a composite number. Further more, $3$ is a divisor of $2^{m}-1$ if and only if $m$ is even. Thus, there are numerous $m$ such that $2^{m}-1$ is a composite number. However, when $m$ is a prime number, we can not ensure $2^{m}-1$ being a prime number. For example, $2^{29}-1=536870911=233\times1103\times2089$.
\begin{Theorem}\label{the-3}
Let $\mathbb{F}_{2^{m}}$ be a finite field such that $2^{m}-1$ is a composite number, and $p$ be the minimal prime divisor of $2^{m}-1$. Let $G$ be a subgroup of $(\mathbb{F}_{2^{m}}^{*},\cdot)$ with order $\frac{2^{m}-1}{p}$, and $\mathbb{F}_{2^{m}}^{*}/G=\{\overline{1},\overline{\gamma},...,\overline{\gamma}^{p-1}\}$ with $\overline{\gamma}$ the generator of $\mathbb{F}_{2^{m}}^{*}/G$. Let $a_{1},\dots, a_{p-2}\in \overline{\gamma}=\gamma G$ be distinct. Suppose $(t,h)=(1,0), 1\leq k<n, \{\alpha_{1},\alpha_{2},...,\alpha_{n}\}$ is a subset of $G\cup\{a_{1},\dots, a_{p-2},0\}$. If $\eta\in \overline{\gamma}=\gamma G$, then $TRS_{k,n}[\bm\alpha,1,0,\eta]$ is MDS.
\end{Theorem}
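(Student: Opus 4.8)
The plan is to apply Lemma~\ref{lem-1}. Since $q=2^{m}$ has characteristic $2$, we have $(-1)^{k}=1$ in $\mathbb{F}_{2^{m}}$, so it suffices to prove that $\eta\prod_{i\in\mathcal{I}}\alpha_{i}\neq 1$ for every $\mathcal{I}\subseteq\{1,\dots,n\}$ with $|\mathcal{I}|=k$. Suppose, for contradiction, that $\eta\prod_{i\in\mathcal{I}}\alpha_{i}=1$ for some such $\mathcal{I}$. If $\alpha_{i}=0$ for some $i\in\mathcal{I}$ then the product is $0\neq 1$, a contradiction; hence every $\alpha_{i}$ with $i\in\mathcal{I}$ lies in $G\cup\{a_{1},\dots,a_{p-2}\}$.

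Next I would reduce the equation $\eta\prod_{i\in\mathcal{I}}\alpha_{i}=1$ modulo $G$, i.e.\ pass to the quotient group $\mathbb{F}_{2^{m}}^{*}/G$, which by hypothesis is cyclic of prime order $p$ with generator $\overline{\gamma}$. Let $j$ be the number of indices $i\in\mathcal{I}$ with $\alpha_{i}\in\gamma G$. Because $a_{1},\dots,a_{p-2}$ are the only elements among $\alpha_{1},\dots,\alpha_{n}$ that lie in the coset $\gamma G$ (every other $\alpha_{i}$ lies in $G$ or equals $0$), we have $0\leq j\leq p-2$. Passing to cosets and using $\overline{\eta}=\overline{\gamma}$, $\overline{\alpha_{i}}=\overline{\gamma}$ for exactly the $j$ indices just counted, and $\overline{\alpha_{i}}=\overline{1}$ for the remaining $k-j$ indices of $\mathcal{I}$, we obtain $\overline{\gamma}^{\,j+1}=\overline{1}$ in $\mathbb{F}_{2^{m}}^{*}/G$.

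Since $\overline{\gamma}$ has order exactly $p$, the relation $\overline{\gamma}^{\,j+1}=\overline{1}$ forces $p\mid j+1$. But $1\leq j+1\leq p-1<p$, which is impossible. This contradiction shows that no such $\mathcal{I}$ exists, so $\eta\prod_{i\in\mathcal{I}}\alpha_{i}\neq1$ for all $k$-subsets $\mathcal{I}$, and $TRS_{k,n}[\bm\alpha,1,0,\eta]$ is MDS by Lemma~\ref{lem-1}.

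This argument is essentially the coset-counting used in the proof of Theorem~\ref{the-2}, so I do not expect a genuine obstacle; the only delicate point is the bound $j\leq p-2$. It is precisely because we adjoin only $p-2$ (rather than $p-1$) additional locators from the coset $\gamma G$ that $j+1$ stays strictly below $p$, which is exactly what keeps $\overline{\gamma}^{\,j+1}\neq\overline{1}$. The hypothesis that $p$ is the \emph{smallest} prime divisor of $2^{m}-1$ plays no role in the MDS verification itself; it only serves to make $|G\cup\{a_{1},\dots,a_{p-2},0\}|=\tfrac{2^{m}-1}{p}+p-1$, hence the largest admissible length $n$, as large as possible, which is what yields the improvement over the $(\ast)$-twisted codes of \cite{P. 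Beelen}.
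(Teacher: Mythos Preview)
Your proof is correct and follows essentially the same approach as the paper: both argue by contradiction via Lemma~\ref{lem-1}, dispose of the possibility $0\in\{\alpha_i:i\in\mathcal{I}\}$, and then pass to the quotient $\mathbb{F}_{2^m}^*/G$ to see that $\overline{\prod_{i\in\mathcal{I}}\alpha_i}\in\{\overline{1},\overline{\gamma},\dots,\overline{\gamma}^{p-2}\}$, which forces $\overline{\eta}\neq\overline{\gamma}$. Your explicit count of the number $j$ of indices landing in $\gamma G$ and the resulting bound $1\le j+1\le p-1$ is just a slightly more detailed phrasing of the paper's observation that the product of cosets cannot hit $\overline{\gamma}^{p-1}$.
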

\begin{proof} Since $p$ is the minimal prime divisor of $2^{m}-1$, then we have $p\geq3$ and $p\leq|G|$. So $\{a_{1},\dots, a_{p-2}\}$ is not an empty set. If a $TRS_{k,n}[\bm\alpha,1,0,\eta]$ is not MDS, then Lemma~\ref{lem-1} implies that there exists $\mathcal{I}\subseteq\{1,2,...,n\}$ such that $\eta\prod_{i\in \mathcal{I}}\alpha_{i}= 1$, implying $$\overline{\eta\prod_{i\in \mathcal{I}}\alpha_{i}}=\overline{\eta}\overline{\prod_{i\in \mathcal{I}}\alpha_{i}}= \overline{1}.$$
If $\alpha_{j}=0$ for some $j$, then it is a contradiction. Thus $\{0\}$ is not contained in $\{\alpha_{1},\alpha_{2},...,\alpha_{n}\}$. Since the $\alpha_{i}$ are containted in $G\cup\{a_{1},\dots, a_{p-2}\}$, we have $\overline{\prod_{i\in\mathcal{I}}\alpha_{i}}= \overline{1}$ or $\overline{\gamma}$ or $\overline{\gamma}^{2}$ or$\dots$or $\overline{\gamma}^{p-2}$. Then $\overline{\eta}=\overline{1}$ or $\overline{\gamma}^{p-1}$ or $\overline{\gamma}^{p-2}$ or$\dots$or $\overline{\gamma}^{2}$. This gives a contradiction.
\end{proof}

Let the notions be as in Theorem~\ref{the-3}, then by Theorem~\ref{theorem-1} we have the following result.
\begin{Corollary}\label{cor-2}
If $|\gamma G|>6$, then for any $n,k$ with $2<k<n-2$ and $n\leq|G\cup\{a_{1},\dots, a_{p-2}\}|$ there exists a non-GRS MDS $TRS_{k,n}[\bm\alpha,1,0,\eta]$ code.
\end{Corollary}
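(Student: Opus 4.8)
The plan is to derive this corollary from Theorem~\ref{the-3} and Theorem~\ref{theorem-1}, in exactly the way Corollary~\ref{cor-1} is obtained from Theorem~\ref{the-2}. Fix $n,k$ with $2<k<n-2$ and $n\leq|G\cup\{a_{1},\dots,a_{p-2}\}|$. Since the set $G\cup\{a_{1},\dots,a_{p-2}\}$ has at least $n$ elements, I can choose pairwise distinct $\alpha_{1},\dots,\alpha_{n}$ inside it and put $\bm\alpha=(\alpha_{1},\dots,\alpha_{n})$; note that $0$ is simply not used here, so the bound $n\leq|G\cup\{a_{1},\dots,a_{p-2}\}|$ is (slightly conservatively) all we need to carry out this choice.

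Next I would set $\mathcal{H}:=\gamma G=\overline{\gamma}\subseteq\mathbb{F}_{2^{m}}^{*}$. By Theorem~\ref{the-3}, for every $\eta\in\mathcal{H}$ the code $TRS_{k,n}[\bm\alpha,1,0,\eta]$ is MDS; in particular every such $\eta$ is nonzero, so it is a legitimate twist in the sense of Definition~\ref{def-2}. Because $\mathcal{H}$ is a coset of $G$, we have $|\mathcal{H}|=|G|=|\gamma G|>6$ by hypothesis. Thus $\mathcal{H}$ is a subset of $\mathbb{F}_{2^{m}}$ for which all the twisted codes $TRS_{k,n}[\bm\alpha,1,0,\eta]$, $\eta\in\mathcal{H}$, are MDS, and (using $2<k<n-2$) Theorem~\ref{theorem-1} with $(t,h)=(1,0)$ applies: at most $6$ values of $\eta\in\mathcal{H}$ give a code equivalent to an RS code.

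Since $|\mathcal{H}|>6$, a pigeonhole count yields at least one $\eta\in\mathcal{H}$ for which $TRS_{k,n}[\bm\alpha,1,0,\eta]$ is MDS and \emph{not} equivalent to any RS code. Because a TRS code has $\bm v=\bm 1$, the remark following Definition~\ref{def-3} (``a code is a GRS code if and only if it is equivalent to an RS code'') turns this into the statement that $TRS_{k,n}[\bm\alpha,1,0,\eta]$ is a non-GRS MDS code, which is what we want.

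I do not expect any genuine obstacle: the two invoked theorems do all the work, and the hypothesis $|\gamma G|>6$ is precisely what is needed to beat the bound $6$ of Theorem~\ref{theorem-1}. The only points requiring care are bookkeeping: verifying that $0$ plays no role so that the displayed length bound really permits the choice of $\bm\alpha$, confirming that every $\eta\in\gamma G$ lies in $\mathbb{F}_{2^{m}}^{*}$ so Definition~\ref{def-2} is respected, and translating ``not equivalent to an RS code'' into ``non-GRS'' via the equivalence criterion stated after Definition~\ref{def-3}.
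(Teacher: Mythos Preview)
Your proposal is correct and follows exactly the route the paper indicates: apply Theorem~\ref{the-3} to obtain that every $\eta\in\gamma G$ gives an MDS code, then invoke Theorem~\ref{theorem-1} together with $|\gamma G|>6$ to conclude that some such $\eta$ yields a non-GRS code. The paper's own proof is just the one-line remark that the corollary follows from these two theorems, so your argument is simply a fleshed-out version of it.
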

\begin{Remark}
In the case $(t,h)=(1,0)$, from Theorem~\ref{the-3}, the code length can reach $\frac{2^{m}-1}{p} + p-1$ where $p$ is the minimal prime divisor of $2^{m}-1$. However, the length of $(\ast)$-twisted codes can only reach $\frac{2^{m}-1}{p} + 1$. But for a finite field $\mathbb{F}_{q}$ of odd $q$, the minimal prime divisor of $q-1$ is $2$, then we can not increase length by this way.
\end{Remark}
\begin{Example}
Let $q=2^{9}$, then $7$ is the minimal prime divisor of $q-1$. Let $\zeta$ be a generator of $(\mathbb{F}_{q}^{*},\cdot)$, and $G=\left\langle\zeta^{7}\right\rangle$. Let $\{a_{1},a_{2},a_{3},a_{4},a_{5}\}\subseteq\zeta G$ and $\{\alpha_{1},\dots,\alpha_{79}\}=G\cup\{a_{1},a_{2},a_{3},a_{4},a_{5},0\}$, then for any $1\leq k<79$ and $\eta\in\zeta G$, $TRS_{k,79}[\bm\alpha,1,0,\eta]$ is MDS. However, by the method in \cite{P. Beelen}, the length only reach $74$. Furthermore, there exist many non-GRS codes.
\end{Example}
Now, we give other constructions utilizing the group $(\mathbb{F}_{q}, +)$. Firstly, we simply state the construction of the abelian group $(\mathbb{F}_{q}, +)$. Let $q=p^{m}$ with $p$ a prime number $m\geq1$, then $\mathbb{F}_{q}$ is a $\mathbb{Z}_{p}$ linear space of dimension $m$. Suppose $\{\mu_{1},\dots,\mu_{m}\}$ is a $\mathbb{Z}_{p}$ base of $\mathbb{F}_{q}$, then $(\mathbb{F}_{q}, +)=\mathbb{Z}_{p}\mu_{1}\bigoplus\dots\bigoplus\mathbb{Z}_{p}\mu_{m}$ where $\bigoplus$ denotes the direct sum of groups. Every $\mathbb{Z}_{p}\mu_{j}$ is isomorphic to $\mathbb{Z}_{p}$. Since $\mathbb{Z}_{p}$ is a cyclic group of prime order, $\mathbb{Z}_{p}$ have only two subgroup $0$ and $\mathbb{Z}_{p}$. So, all subgroup of $(\mathbb{F}_{q}, +)$ are $\{0,(\mathbb{F}_{p}, +),(\mathbb{F}_{p^{2}}, +),\dots,(\mathbb{F}_{p^{m}}, +)\}$ in the isomorphism sense.

Let $q=p^{m}$, $V$ be a proper subgroup of $(\mathbb{F}_{q},+)$, $V$ is also a normal subgroup, since $(\mathbb{F}_{q},+)$ is abelian. For any $b\in \mathbb{F}_{q}$, let $\overline{b}=b+V:=\{b+v:v\in V\}$ be a coset. $\mathbb{F}_{q}/V = \{\overline{b_{1}},\overline{b_{2}},\dots,\overline{b_{s}}\}$ is the quotient group with $\{b_{1},b_{2},...,b_{s}\}$ being a representation elements of cosets. Then we have the following result.
\begin{Theorem}\label{the-4}
Let $\overline{L}=\{\overline{b_{i_{1}}},\overline{b_{i_{2}}},\dots,\overline{b_{i_{r}}}\},r\geq1$ be a proper subgroup of $\mathbb{F}_{q}/V$, and let $(t,h)=(1,k-1), 1\leq k<n, \{\alpha_{1},\alpha_{2},\dots,\alpha_{n}\}$ be a subset of $\cup_{j=1}^{r}(b_{i_{j}}+V)\bigcup\{\infty\}$. If $(-\eta)^{-1}\in \mathbb{F}_{q}\setminus\cup_{j=1}^{r}(b_{i_{j}}+V) $, then $TRS_{k,n}[\bm\alpha,1,k-1,\eta]$ is MDS.
\end{Theorem}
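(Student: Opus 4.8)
The plan is to argue by contradiction along the same lines as the proof of Theorem~\ref{the-2}, but working in the \emph{additive} quotient group $\mathbb{F}_{q}/V$ and invoking Lemma~\ref{lem-2} in place of Lemma~\ref{lem-1}. Assume $TRS_{k,n}[\bm\alpha,1,k-1,\eta]$ is not MDS. The hypothesis forces $(-\eta)^{-1}$ to be defined, so $\eta\neq 0$. By Lemma~\ref{lem-2} there is a subset $\mathcal{I}\subseteq\{1,\dots,n\}$ with $|\mathcal{I}|=k$ and $\eta\sum_{i\in\mathcal{I}}\alpha_{i}=-1$; dividing by $\eta$ and using $(-\eta)^{-1}=-\eta^{-1}$, this reads $\sum_{i\in\mathcal{I}}\alpha_{i}=(-\eta)^{-1}$.

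Next I would dispose of the evaluation point $\infty$, which must be addressed because Lemma~\ref{lem-2} is stated only for finite locators. If $\alpha_{j}=\infty$ for some $j$, then the corresponding column of the generator matrix of the $(+)$-twisted code is proportional to the unit vector $(0,\dots,0,1)^{T}$ (the leading-coefficient functional); hence a Laplace expansion, along that column, of any $k\times k$ minor indexed by a set containing $j$ reduces the minor to a $(k-1)\times(k-1)$ Vandermonde minor in distinct locators, which is nonzero. So minors meeting the $\infty$-column never witness failure of the MDS property, and we may therefore assume $j\notin\mathcal{I}$, i.e. $\overline{\alpha_{i}}\in\overline{L}$ for every $i\in\mathcal{I}$.

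Finally, I would reduce modulo $V$. Since $x\mapsto\overline{x}$ is a homomorphism of additive groups,
$$
\overline{(-\eta)^{-1}}\;=\;\overline{\sum_{i\in\mathcal{I}}\alpha_{i}}\;=\;\sum_{i\in\mathcal{I}}\overline{\alpha_{i}}.
$$
Each $\overline{\alpha_{i}}$ lies in the subgroup $\overline{L}$ of $\mathbb{F}_{q}/V$, so this finite sum lies in $\overline{L}$, i.e. $(-\eta)^{-1}\in\bigcup_{j=1}^{r}(b_{i_{j}}+V)$. This contradicts the hypothesis $(-\eta)^{-1}\in\mathbb{F}_{q}\setminus\bigcup_{j=1}^{r}(b_{i_{j}}+V)$, which is the desired contradiction. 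As noted at the beginning of this section, the argument also carries over to the GTRS version via the equivalence of GTRS and TRS codes.

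I expect the only genuinely subtle point to be the bookkeeping around $\infty$: one must check that Lemma~\ref{lem-2} (or, more precisely, its proof) governs exactly the $k\times k$ minors that avoid the $\infty$-column, and that those are precisely the ones controlled by the sums $\sum_{i\in\mathcal{I}}\alpha_{i}$. The remaining ingredients — the reformulation of the non-MDS condition as $\sum_{i\in\mathcal{I}}\alpha_{i}=(-\eta)^{-1}$ and the closure of $\overline{L}$ under the group operation of $\mathbb{F}_{q}/V$ — are routine and mirror the proof of Theorem~\ref{the-2} step for step.
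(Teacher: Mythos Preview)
Your proposal is correct and follows the same route as the paper: assume non-MDS, invoke Lemma~\ref{lem-2} to obtain $\sum_{i\in\mathcal{I}}\alpha_i=(-\eta)^{-1}$, reduce modulo $V$, and contradict $(-\eta)^{-1}\notin\bigcup_j(b_{i_j}+V)$ via closure of the subgroup $\overline{L}$. Your Laplace-expansion handling of the $\infty$-column is simply the determinant counterpart of the paper's polynomial observation that $f(\infty)=0$ forces $\deg f\le k-2$; both dispose of $\infty$ in the same spirit.
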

\begin{proof} It is similar to the proof of Theorem~\ref{the-2} by using Lemma~\ref{lem-2}. Some care must be taken that adding $\infty$ to a set of evaluation points preserves the MDS property. However, if a polynomial $f\in \mathcal{P}_{k,n}[1,k-1,\eta]$ satisfies $f(\infty)=0$, then its degree is at most $k-2$.
\end{proof}

Let the notions be as in Theorem~\ref{the-4}, then by Theorem~\ref{theorem-1} we have the following result.
\begin{Corollary}
If $|\mathbb{F}_{q}\setminus\cup_{j=1}^{r}(b_{i_{j}}+V)|>6$, then for any $n,k$ with $2<k<n-2$ and $n\leq | \cup_{j=1}^{r}(b_{i_{j}}+V)|$ there exists a non-GRS MDS $TRS_{k,n}[\bm\alpha,1,k-1,\eta]$ code.
\end{Corollary}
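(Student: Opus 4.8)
The plan is to argue exactly as in Corollary~\ref{cor-1} and Corollary~\ref{cor-2}: exhibit a large family of admissible twist parameters $\eta$ for which Theorem~\ref{the-4} already guarantees the MDS property, then invoke Theorem~\ref{theorem-1} to discard the few of them that could give a code equivalent to an RS code. Write $W:=\bigcup_{j=1}^{r}(b_{i_{j}}+V)$. Since $\overline{L}$ is a subgroup of $\mathbb{F}_{q}/V$ it contains the identity coset $\overline{0}$, so $0\in W$ and therefore $\mathbb{F}_{q}\setminus W\subseteq\mathbb{F}_{q}^{*}$.

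First I would fix the locators once and for all. Because $n\leq|W|$, we may choose pairwise distinct $\alpha_{1},\dots,\alpha_{n}\in W\subseteq\mathbb{F}_{q}$; crucially none of them is $\infty$, which keeps $\bm\alpha$ inside the hypotheses of Theorem~\ref{theorem-1} (stated only for finite locators). Next set
\[
\mathcal{H}:=\{\eta\in\mathbb{F}_{q}:(-\eta)^{-1}\in\mathbb{F}_{q}\setminus W\}\subseteq\mathbb{F}_{q}^{*}.
\]
The map $\eta\mapsto(-\eta)^{-1}$ is an involution, hence a bijection, of $\mathbb{F}_{q}^{*}$, and $\mathbb{F}_{q}\setminus W\subseteq\mathbb{F}_{q}^{*}$, so $|\mathcal{H}|=|\mathbb{F}_{q}\setminus W|>6$ by hypothesis. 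For every $\eta\in\mathcal{H}$, Theorem~\ref{the-4} applied to this $\bm\alpha$ (note $\{\alpha_{1},\dots,\alpha_{n}\}\subseteq W\subseteq W\cup\{\infty\}$ and $(-\eta)^{-1}\in\mathbb{F}_{q}\setminus W$) shows that $TRS_{k,n}[\bm\alpha,1,k-1,\eta]$ is MDS.

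Finally, since $2<k<n-2$, Theorem~\ref{theorem-1} applies with this set $\mathcal{H}$ of twist parameters: at most six values $\eta\in\mathcal{H}$ give a code $TRS_{k,n}[\bm\alpha,1,k-1,\eta]$ equivalent to an RS code. As $|\mathcal{H}|>6$, there is some $\eta\in\mathcal{H}$ for which $TRS_{k,n}[\bm\alpha,1,k-1,\eta]$ is MDS and not equivalent to an RS code; by the fact that a code is a GRS code if and only if it is equivalent to an RS code, this $\eta$ produces the desired non-GRS MDS code. This is essentially a bookkeeping argument on top of the three earlier results, so I do not expect a genuine obstacle; the only points that need attention are the choice of $\bm\alpha$ with all locators finite (possible precisely because $n\leq|W|$), so that Theorem~\ref{theorem-1} is applicable, and the cardinality count $|\mathcal{H}|=|\mathbb{F}_{q}\setminus W|>6$ coming from the bijectivity of $\eta\mapsto(-\eta)^{-1}$.
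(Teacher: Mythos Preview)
Your proposal is correct and follows essentially the same route as the paper, which simply records that the corollary is an immediate consequence of Theorem~\ref{the-4} together with Theorem~\ref{theorem-1}. Your extra care in keeping all locators finite so that Theorem~\ref{theorem-1} applies, and in counting $|\mathcal{H}|$ via the bijection $\eta\mapsto(-\eta)^{-1}$, just makes explicit the details that the paper leaves to the reader.
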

\begin{Remark}
The construction of Theorem~\ref{the-4} is $(+)$-twisted codes in\cite{P. Beelen} when $V=\{0\}$.
\end{Remark}
\begin{Remark}
From the result, the code length can reach $r|V|+1$. If the quotient group $\mathbb{F}_{q}/V$ has proper non-trivial subgroup, then our construction increases the code length. It is easy to see that $r|V|=p^{t}$ for some $t<m$. However, there is always a subgroup of $(\mathbb{F}_{q},+)$ with order $p^{t}$. So, Theorem~\ref{the-4} can not increase the maximal length which $(+)$-twisted codes reached.
\end{Remark}
\begin{Theorem}\label{the-5}
Let $\mathbb{F}_{p^{m}}$ be a finite field with $p$ an odd prime number and $m>1$. Let $V$ be a subgroup of $(\mathbb{F}_{q},+)$ with order $p^{m-1}$. Suppose $\mathbb{F}_{q}/V=\{\overline{0},\overline{b},\dots,(p-1)\overline{b}\}$ with $\overline{b}$ the generator of $\mathbb{F}_{q}/V$. Let $c_{1},c_{2},\dots,c_{p-2}\in\overline{b}=b+V$ be distinct. Let $(t,h)=(1,k-1), 1\leq k<n, \{\alpha_{1},\alpha_{2},...,\alpha_{n}\}$ be a subset of $V\bigcup\{c_{1},c_{2},\dots,c_{p-2},\infty\}$. If $(-\eta)^{-1}\in (p-1)\overline{b}=(p-1)b+V$, then $TRS_{k,n}[\bm\alpha,1,k-1,\eta]$ is MDS.
\end{Theorem}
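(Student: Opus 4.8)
The plan is to run the proof of Theorem~\ref{the-3} almost verbatim, but with the multiplicative group replaced by the additive group $(\mathbb{F}_q,+)$, the quotient $\mathbb{F}_{2^m}^*/G$ replaced by $\mathbb{F}_q/V$, and Lemma~\ref{lem-1} replaced by Lemma~\ref{lem-2}. First I would record that the hypotheses make the construction non-vacuous: since $p$ is an odd prime and $m>1$ we have $1\le p-2\le p-1\le p^{m-1}=|V|=|b+V|$, so the coset $b+V$ genuinely contains $p-2$ distinct elements $c_1,\dots,c_{p-2}$. The only feature of the additive setting that has no counterpart in Theorem~\ref{the-3} is the evaluation point $\infty$, which must be handled as in the proof of Theorem~\ref{the-4}.

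The core of the argument is the following reduction. Suppose, for contradiction, that $TRS_{k,n}[\bm\alpha,1,k-1,\eta]$ is not MDS; then it has a nonzero codeword $ev_{\bm\alpha,\bm 1}(f)$ with at least $k$ zero coordinates, where $f(x)=\sum_{i=0}^{k-1}f_ix^i+\eta f_{k-1}x^{k}\in\mathcal{P}_{k,n}[1,k-1,\eta]$. I claim that in every case one obtains a subset $\mathcal{I}\subseteq\{1,\dots,n\}$ with $|\mathcal{I}|=k$, all $\alpha_i$ ($i\in\mathcal{I}$) lying in $\mathbb{F}_q$, such that $\eta\sum_{i\in\mathcal{I}}\alpha_i=-1$, i.e. $\sum_{i\in\mathcal{I}}\alpha_i=(-\eta)^{-1}$. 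If $\infty\notin\{\alpha_1,\dots,\alpha_n\}$ this is exactly the content of Lemma~\ref{lem-2}. If $\infty\in\{\alpha_1,\dots,\alpha_n\}$, say $\alpha_n=\infty$, then I would split on $f(\infty)$: if $f(\infty)=0$ then $f_{k-1}=0$, so $f$ is a nonzero polynomial of degree at most $k-2$ and vanishes at at most $k-2$ of $\alpha_1,\dots,\alpha_{n-1}$, giving at most $k-1$ zero coordinates in all, a contradiction; hence $f(\infty)\ne0$, so $f_{k-1}\ne0$, $\deg f=k$, and $f$ vanishes at at least $k$ of the distinct finite points $\alpha_1,\dots,\alpha_{n-1}$. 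Choosing $k$ such indices as $\mathcal{I}$ gives $f(x)=\eta f_{k-1}\prod_{i\in\mathcal{I}}(x-\alpha_i)$, and comparing coefficients of $x^{k-1}$ yields $f_{k-1}=-\eta f_{k-1}\sum_{i\in\mathcal{I}}\alpha_i$, hence $\eta\sum_{i\in\mathcal{I}}\alpha_i=-1$ after cancelling $f_{k-1}$.

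To finish, I would pass to $\mathbb{F}_q/V\cong\mathbb{Z}_p$. For each $i\in\mathcal{I}$ the element $\alpha_i$ lies in $V\cup\{c_1,\dots,c_{p-2}\}$, so $\overline{\alpha_i}\in\{\overline 0,\overline b\}$; letting $m$ be the number of $i\in\mathcal{I}$ with $\alpha_i\in\{c_1,\dots,c_{p-2}\}$, we have $0\le m\le p-2$ and $\overline{\sum_{i\in\mathcal{I}}\alpha_i}=m\overline b$. Since $\overline b$ has order $p$ in $\mathbb{F}_q/V$, $m\overline b\ne(p-1)\overline b$ for every $m$ with $0\le m\le p-2$. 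But the hypothesis $(-\eta)^{-1}\in(p-1)b+V$ says $\overline{(-\eta)^{-1}}=(p-1)\overline b$, while $\sum_{i\in\mathcal{I}}\alpha_i=(-\eta)^{-1}$ forces $m\overline b=(p-1)\overline b$ — a contradiction. Hence $TRS_{k,n}[\bm\alpha,1,k-1,\eta]$ is MDS.

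The step I expect to be the main obstacle is the $\infty$ bookkeeping in the second paragraph: Lemma~\ref{lem-2} is stated only for finite evaluation points, so when $\infty$ occurs one cannot quote it directly and must instead re-derive the identity $\eta\sum_{i\in\mathcal{I}}\alpha_i=-1$ from the weight hypothesis via the degree-and-factorization argument above. Everything else is a mechanical translation of the proof of Theorem~\ref{the-3} from multiplicative to additive language.
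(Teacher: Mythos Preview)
Your proposal is correct and follows exactly the approach the paper indicates: the paper's own proof consists solely of the sentence ``It is similar to the proofs of Theorem~\ref{the-3} and Theorem~\ref{the-4},'' and you have faithfully carried out that translation---running the coset-counting argument of Theorem~\ref{the-3} in the additive quotient $\mathbb{F}_q/V$ via Lemma~\ref{lem-2}, while handling the point $\infty$ by the degree-drop observation from the proof of Theorem~\ref{the-4}. Your treatment is in fact more detailed than the paper's, since you spell out explicitly the factorization $f(x)=\eta f_{k-1}\prod_{i\in\mathcal{I}}(x-\alpha_i)$ and the coefficient comparison that yield $\eta\sum_{i\in\mathcal{I}}\alpha_i=-1$ when $\infty$ is present, whereas the paper leaves this implicit.
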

\begin{proof} It is similar to the proofs of Theorem~\ref{the-3} and Theorem~\ref{the-4}.
\end{proof}

Let the notions be as in Theorem~\ref{the-5}, then by Theorem~\ref{theorem-1} we have the following result.
\begin{Corollary}
If $|V|>6$, then for any $n,k$ with $2<k<n-2$ and $n\leq |V\bigcup\{c_{1},c_{2},\dots,c_{p-2}\}|$ there exists a non-GRS MDS $TRS_{k,n}[\bm\alpha,1,k-1,\eta]$ code.
\end{Corollary}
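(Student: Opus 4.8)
The plan is to mimic the proofs of Theorem~\ref{the-3} and Theorem~\ref{the-4}, using the sum criterion of Lemma~\ref{lem-2} in place of the product criterion of Lemma~\ref{lem-1}, and to pass to the quotient group $\mathbb{F}_{q}/V$ to control the relevant $k$-subset sums. First I would argue by contradiction: suppose $TRS_{k,n}[\bm\alpha,1,k-1,\eta]$ is not MDS. If $\infty\notin\{\alpha_{1},\dots,\alpha_{n}\}$, then all locators are finite and Lemma~\ref{lem-2} yields a $k$-subset $\mathcal{I}\subseteq\{1,\dots,n\}$ with $\eta\sum_{i\in\mathcal{I}}\alpha_{i}=-1$, i.e.\ $\sum_{i\in\mathcal{I}}\alpha_{i}=(-\eta)^{-1}$ (note $(-\eta)^{-1}=-\eta^{-1}$). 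Everything then reduces to showing this equation is impossible under the hypothesis $(-\eta)^{-1}\in(p-1)b+V$.

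The combinatorial heart is the following. Consider the canonical projection $\mathbb{F}_{q}\to\mathbb{F}_{q}/V$, $x\mapsto\overline{x}$; since $|V|=p^{m-1}$, the quotient is cyclic of order $p$ generated by $\overline{b}$, so its elements are exactly $\overline{0},\overline{b},\dots,(p-1)\overline{b}$, all distinct. Every finite locator lies in $V\cup\{c_{1},\dots,c_{p-2}\}$, hence projects either to $\overline{0}$ (those in $V$) or to $\overline{b}$ (the $c_{j}$'s). If $\mathcal{I}$ is a $k$-subset of the finite locators containing exactly $m_{1}$ of the $c_{j}$'s, then $0\le m_{1}\le p-2$ (there are only $p-2$ of them, all distinct), so $\overline{\sum_{i\in\mathcal{I}}\alpha_{i}}=m_{1}\overline{b}\in\{\overline{0},\overline{b},\dots,(p-2)\overline{b}\}$. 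This set does not contain $(p-1)\overline{b}$, whereas $\overline{(-\eta)^{-1}}=(p-1)\overline{b}$ by hypothesis; hence $\sum_{i\in\mathcal{I}}\alpha_{i}\neq(-\eta)^{-1}$, the desired contradiction. Here $p$ odd guarantees $p-2\ge1$ so that $\{c_{1},\dots,c_{p-2}\}$ is nonempty --- this is precisely where the length gain over the $(+)$-twisted codes comes from --- and $m>1$ guarantees $V$ is a proper nontrivial subgroup (with $|V|=p^{m-1}\ge p$).

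The step that needs a little care --- as already flagged in the proof of Theorem~\ref{the-4} --- is the case $\infty\in\{\alpha_{1},\dots,\alpha_{n}\}$, say $\alpha_{n}=\infty$, since Lemma~\ref{lem-2} is stated only for finite locators. Here I would work directly with a nonzero codeword $ev_{\bm\alpha,\bm 1}(g)$ arising from $g(x)=f(x)+\eta f_{k-1}x^{k}\in\mathcal{P}_{k,n}[1,k-1,\eta]$, whose entry in the $\infty$-coordinate is $f_{k-1}$. If $f_{k-1}=0$ then $g=f$ has degree at most $k-2$, so it vanishes at at most $k-2$ of the finite locators; together with the zero at $\infty$ this forces Hamming weight at least $n-k+1$. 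If $f_{k-1}\neq0$ then $g$ has degree exactly $k$ and its $\infty$-entry is nonzero; should $g$ vanish at $k$ of the finite locators $\{\alpha_{i}:i\in\mathcal{I}\}$, then $g=\eta f_{k-1}\prod_{i\in\mathcal{I}}(x-\alpha_{i})$ and comparing the coefficient of $x^{k-1}$ gives $1=-\eta\sum_{i\in\mathcal{I}}\alpha_{i}$, i.e.\ $\sum_{i\in\mathcal{I}}\alpha_{i}=(-\eta)^{-1}$ with $\mathcal{I}$ a $k$-subset of the finite locators, which the quotient-group argument above rules out; hence $g$ vanishes at at most $k-1$ finite locators and the codeword again has weight at least $n-k+1$. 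In all cases every nonzero codeword has weight at least $n-k+1$, so the code is MDS. I expect the $\infty$-bookkeeping (the degree drop and the coefficient comparison) to be the only genuinely fiddly part; the quotient-group computation is immediate once it is set up, and the overall skeleton is identical to that of Theorems~\ref{the-3} and~\ref{the-4}.
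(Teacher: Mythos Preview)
Your argument correctly establishes that $TRS_{k,n}[\bm\alpha,1,k-1,\eta]$ is MDS whenever $(-\eta)^{-1}\in(p-1)b+V$; in fact you have supplied the details that the paper omits for Theorem~\ref{the-5}. But that is not what this Corollary asserts. The Corollary claims the existence of a \emph{non-GRS} MDS code, and your proposal never touches the non-GRS part: you do not invoke Theorem~\ref{theorem-1}, and you never use either of the hypotheses $|V|>6$ or $2<k<n-2$, both of which are there precisely so that Theorem~\ref{theorem-1} applies.

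The missing step is the counting argument. By Theorem~\ref{the-5} (i.e.\ by what you proved), the code is MDS for every $\eta$ with $(-\eta)^{-1}\in(p-1)b+V$; since $(p-1)b+V$ is a coset of $V$ not containing $0$, the map $x\mapsto -x^{-1}$ is a bijection from it onto a set $\mathcal{H}\subseteq\mathbb{F}_q^*$ of size $|V|$. Because $2<k<n-2$, Theorem~\ref{theorem-1} says at most six values of $\eta\in\mathcal{H}$ can yield a code equivalent to an RS code, and the hypothesis $|V|>6$ then forces some $\eta\in\mathcal{H}$ to give a non-GRS MDS code. Without this step your proof establishes only the MDS property, which is already the content of Theorem~\ref{the-5}, not the Corollary.
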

\begin{Remark}
In the case $(t,h)=(1,k-1)$, from Theorem~\ref{the-5}, the code length can reach $p^{m-1}+ p-1$. However, the length of $(+)$-twisted codes can only reach $p^{m-1} + 1$. But for finite field $\mathbb{F}_{q}$ of even $q$, the minimal prime divisor of $q$ is $2$, then we can not increase length by this way.
\end{Remark}
\begin{Example}
Let $q=7^{2}$, $\mathbb{F}_{7^{2}}=\mathbb{F}_{7}(\theta)$ with $\theta^{2}+2=0$ and $V=(\mathbb{F}_{7},+)$. Let $\{b_{1},b_{2},b_{3},b_{4},b_{5}\}\subset\theta+V$, $\{\alpha_{1},\dots,\alpha_{13}\}=V\cup\{b_{1},b_{2},b_{3},b_{4},b_{5},\infty\}$. Then for any $1\leq k<13$ and $(-\eta)^{-1}\in 6\theta+V$, $TRS_{k,13}[\bm\alpha,1,k-1,\eta]$ are MDS. However, by the method in \cite{P. Beelen}, the length only reach $8$. Furthermore, there exists a non-GRS code.
\end{Example}

\begin{Remark}
Note that, for general $t$ and $h$, the length of MDS twisted Reed-Solomon codes over $\mathbb{F}_{q}$ can reach $\frac{q-1}{2}+1$ when $q$ is odd and $\frac{q}{2}+1$ when $q$ is even in \cite{P. Beelen}. If we only focus on the maximal length of MDS twisted Reed-Solomon codes, our constructions make no contribution.
\end{Remark}
\section{LCD MDS codes}
In this section, we give two constructions of LCD MDS Codes by using generalized twisted Reed-Solomon codes. The following two results are known.

\begin{Proposition}\label{pro-4.1}\cite{Mesnager S.}
Let $G$ be a generator matrix of a linear code $C$, then $C$ is self orthogonal if and only if $GG^{T}=0$.
\end{Proposition}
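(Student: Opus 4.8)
The plan is to unwind both definitions and reduce self-orthogonality --- a condition that a priori involves all pairs of codewords --- to a finite condition on the rows of $G$. First I would write $G$ as a $k\times n$ matrix with rows $g_1,\dots,g_k$, so that $C=\{xG:x\in\mathbb{F}_q^k\}$ and $\{g_1,\dots,g_k\}$ spans $C$. The key observation is that the $(i,j)$th entry of $GG^{T}$ equals the Euclidean inner product $\langle g_i,g_j\rangle$; hence $GG^{T}=0$ is equivalent to saying that $\langle g_i,g_j\rangle=0$ for all $1\le i,j\le k$.

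For the necessity direction I would suppose $C\subseteq C^{\bot}$; then each row $g_i\in C\subseteq C^{\bot}$ is orthogonal to every codeword, in particular to each $g_j$, so every entry of $GG^{T}$ vanishes and $GG^{T}=0$. For the sufficiency direction I would suppose $GG^{T}=0$ and take arbitrary $c,c'\in C$, writing $c=\sum_i a_i g_i$ and $c'=\sum_j b_j g_j$ with $a_i,b_j\in\mathbb{F}_q$; bilinearity of the inner product then gives $\langle c,c'\rangle=\sum_{i,j}a_i b_j\langle g_i,g_j\rangle=0$, so $C\subseteq C^{\bot}$, i.e. $C$ is self orthogonal. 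Equivalently, one can note that $C^{\bot}$ is precisely the right kernel $\{y\in\mathbb{F}_q^n:Gy^{T}=0\}$ of $G$, whence $C\subseteq C^{\bot}$ iff $Gc^{T}=0$ for every $c\in C$ iff $Gg_i^{T}=0$ for every row $g_i$ iff $GG^{T}=0$.

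I do not expect any genuine obstacle here: the only point that deserves to be stated explicitly is that a bilinear identity need only be verified on a spanning set, which is exactly what licenses replacing ``for all pairs of codewords'' by ``for all pairs of rows of $G$''. Everything else is bookkeeping with the definitions of $C$, $C^{\bot}$, and matrix multiplication.
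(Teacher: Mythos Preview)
Your argument is correct and is the standard elementary proof of this fact. Note, however, that the paper does not prove this proposition at all: it simply cites it from \cite{Mesnager S.} as a known result, so there is no ``paper's own proof'' to compare against.
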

\begin{Lemma}\label{lem-4.1}\cite{Mesnager S.}
Let $C$ be an $[\,n, k, d\,]$ self orthogonal linear code generated by the matrix $G = [I_{k} : P]$. Then for any $\beta \in \mathbb{F}_{q}\backslash\{0, 1, -1\}$, the linear code $C_{\beta}$ generated by the matrix $G_{\beta} = [I_{k} : \beta P]$ is an $[\,n, k, d\,]$ LCD code.
\end{Lemma}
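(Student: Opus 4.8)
The plan is to use the self-orthogonality characterization from Proposition~\ref{pro-4.1} to pin down $PP^{T}$, and then compute $G_{\beta}G_{\beta}^{T}$ directly. First, since $C$ is self orthogonal with generator matrix $G=[I_{k}:P]$, Proposition~\ref{pro-4.1} gives $GG^{T}=I_{k}+PP^{T}=0$, hence $PP^{T}=-I_{k}$. Consequently
$$
G_{\beta}G_{\beta}^{T}=[I_{k}:\beta P]\,[I_{k}:\beta P]^{T}=I_{k}+\beta^{2}PP^{T}=(1-\beta^{2})I_{k}.
$$
Because $\beta\in\mathbb{F}_{q}\backslash\{0,1,-1\}$ we have $1-\beta^{2}\neq 0$ in $\mathbb{F}_{q}$ (note that in characteristic $2$ the set $\{0,1,-1\}$ is just $\{0,1\}$, consistent with the hypothesis), so $G_{\beta}G_{\beta}^{T}$ is an invertible $k\times k$ matrix.

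Next I would deduce that $C_{\beta}$ is LCD from this invertibility. Suppose $v\in C_{\beta}\cap C_{\beta}^{\bot}$. Writing $v=xG_{\beta}$ for some $x\in\mathbb{F}_{q}^{k}$, the condition $v\in C_{\beta}^{\bot}$ forces $v$ to be orthogonal to every row of $G_{\beta}$, i.e. $G_{\beta}v^{T}=0$, and therefore $x\,G_{\beta}G_{\beta}^{T}=0$. Since $G_{\beta}G_{\beta}^{T}=(1-\beta^{2})I_{k}$ is invertible, $x=0$, so $v=0$; thus $C_{\beta}\cap C_{\beta}^{\bot}=\{0\}$, which is exactly the LCD condition.

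Finally I would check that $C_{\beta}$ is an $[\,n,k,d\,]$ code. The matrix $G_{\beta}$ has full row rank $k$ owing to its leading $I_{k}$ block, so $\dim C_{\beta}=k$ and the length is $n$. Moreover, for $x\in\mathbb{F}_{q}^{k}$ one has $xG_{\beta}=(x,\beta xP)$, which is obtained from the codeword $xG=(x,xP)\in C$ by multiplying the last $n-k$ coordinates by the nonzero scalar $\beta$; this is a diagonal (monomial) transformation, hence weight-preserving, so $C_{\beta}$ has the same minimum distance $d$ as $C$.

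The computation itself is short, so I do not expect a real obstacle; the only points needing a little care are the characteristic-$2$ case (where $1$ and $-1$ coincide, which is why the excluded set is written as $\{0,1,-1\}$), and the reason $\beta=0$ must be excluded even though $G_{0}G_{0}^{T}=I_{k}$ would still be invertible — namely, that $\beta=0$ collapses the right-hand block of $G_{\beta}$ and generally changes the minimum distance, so the conclusion "$[\,n,k,d\,]$ LCD code'' would fail.
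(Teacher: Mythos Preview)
Your argument is correct. Note, however, that the paper does not prove this lemma at all: it is quoted from \cite{Mesnager S.} without proof, so there is nothing to compare against except to say that your derivation is sound. In fact, the step where you pass from invertibility of $G_{\beta}G_{\beta}^{T}$ to the LCD property is exactly the content of the (also cited) proposition appearing a few lines later in the paper (``$C$ is LCD if and only if $GG^{T}$ is nonsingular''), so you could shorten your write-up by invoking that directly instead of arguing via $v=xG_{\beta}$. The parameter-preservation part via monomial equivalence is the standard and correct way to handle the $[\,n,k,d\,]$ claim.
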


We have the following lemma.
\begin{Lemma}\label{lem-4.2}
Let $C$ be an $[\,n, k, n-k+1\,]$ self orthogonal MDS linear code generated by the matrix $G = [A_{k\times k} : B_{k\times(n-k)}]$. Then for any $\beta \in \mathbb{F}_{q}\backslash\{0, 1, -1\}$, the linear code $C_{\beta}$ generated by the matrix $G_{\beta} = [A_{k\times k} : \beta B_{k\times(n-k)}]$ is an $[\,n, k, n-k+1\,]$ LCD MDS code.
\end{Lemma}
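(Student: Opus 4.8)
The plan is to reduce Lemma~\ref{lem-4.2} to the already-established Lemma~\ref{lem-4.1} by showing that an invertible change of basis on the left-hand block $A_{k\times k}$ does not affect the relevant properties. Since $C$ is $[n,k,n-k+1]$ MDS, its generator matrix has full rank $k$, and in particular the $k\times k$ block $A$ must be invertible (if $A$ were singular, the first $k$ columns would be linearly dependent, contradicting the MDS property, which forces every $k$ columns of $G$ to be linearly independent). Hence $A^{-1}$ exists.

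First I would set $G' = A^{-1}G = [I_k : A^{-1}B]$, which is another generator matrix of the same code $C$, now in standard form with $P := A^{-1}B$. Since $C$ is self orthogonal and $G'$ generates $C$, Proposition~\ref{pro-4.1} gives $G'(G')^T = 0$, so $C$ is self orthogonal with generator matrix $[I_k : P]$; applying Lemma~\ref{lem-4.1}, for any $\beta \in \mathbb{F}_q\setminus\{0,1,-1\}$ the code $C'_\beta$ generated by $[I_k : \beta P] = [I_k : \beta A^{-1}B]$ is an $[n,k,n-k+1]$ LCD code. Next I would observe that left-multiplying $[I_k : \beta A^{-1}B]$ by the invertible matrix $A$ yields $[A : \beta B] = G_\beta$, so $G_\beta$ and $[I_k : \beta A^{-1}B]$ generate the same code, i.e. $C_\beta = C'_\beta$. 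Therefore $C_\beta$ is an $[n,k,n-k+1]$ LCD code. It remains to note that $C_\beta$ is MDS: it has the same parameters $[n,k,n-k+1]$, and $d = n-k+1$ is exactly the Singleton bound, so $C_\beta$ is MDS by definition. This completes the proof.

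I do not expect a genuine obstacle here; the lemma is essentially a basis-invariance observation layered on top of Lemma~\ref{lem-4.1}. The one point that needs a line of care is the justification that $A$ is invertible — this uses the MDS hypothesis in an essential way (the analogous statement is false for a general self-orthogonal code whose first $k$ columns happen to be dependent), and it is also why Lemma~\ref{lem-4.1} alone does not immediately apply to a generator matrix presented as $[A : B]$. A second minor point is checking that the multiplier condition $\beta \notin \{0,1,-1\}$ transfers verbatim: since $\beta P$ and $\beta A^{-1}B$ are literally the same matrix in the reduced presentation, Lemma~\ref{lem-4.1} applies with the identical constraint on $\beta$, so nothing new is required.
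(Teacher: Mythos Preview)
Your proposal is correct and follows essentially the same route as the paper: use the MDS hypothesis to invert $A$, pass to the standard-form generator $[I_k:A^{-1}B]$, apply Lemma~\ref{lem-4.1}, and then left-multiply by $A$ to recover $G_\beta$. The only difference is that you spell out why $A$ is invertible and separate the LCD and MDS conclusions explicitly, whereas the paper compresses these into single clauses.
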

\begin{proof} Since $C$ is an MDS code, then the matrix $A$ is nonsingular. So, $G^{'} = A^{-1}G = [I_{k} : A^{-1}B]$ is also a generator matrix of $C$. By Lemma~\ref{lem-4.1}, for any $\beta \in \mathbb{F}_{q}\backslash\{0, 1, -1\}$, the linear code $C^{'}_{\beta}$ generated by the matrix $G^{'}_{\beta} = [I_{k} : \beta A^{-1}B ]$ is an $[\,n, k, n-k+1\,]$ LCD MDS code. However, $C^{'}_{\beta}$ can be generated by $AG^{'}_{\beta} = [A : A(\beta A^{-1}B)] = [A : \beta B]$, which completes the proof.
\end{proof}

For any distinct elements $\alpha_{1},\alpha_{2},...,\alpha_{n}$ of $\mathbb{F}_{q}$, put $\bm\alpha=(\alpha_{1},\alpha_{2},...,\alpha_{n})$ and denote by $A_{\alpha}$ the matrix
$$
\left[
\begin{matrix}
1&1&\dots&1\\
\alpha_{1}&\alpha_{2}&\dots&\alpha_{n}\\
\vdots&\vdots&&\vdots\\
\alpha_{1}^{n-2}&\alpha_{2}^{n-2}&\dots&\alpha_{n}^{n-2}
\end{matrix}
\right]_{(n-1)\times n}.
$$

In~\cite{Jin L.}, the authors have shown that the solution space of the equation system $A_{\alpha}\bm X^{T}=\mathbf{0}$ has dimension $1$ and $\{\bm u=(u_{1},u_{2},...,u_{n})\}$ is a basis of this solution space, where $u_{i}=\prod_{1\leq j\leq n,j\neq i}(\alpha_{i}-\alpha_{j})^{-1}$, especially $u_{i}\neq 0$ for all $i\in\{1,...,n\}$. In the following of this section, for any distinct $\alpha_{1},\dots,\alpha_{n}$ of $\mathbb{F}_{q}$, we always let $\bm u=(u_{1},u_{2},...,u_{n})$ where $u_{i}=\prod_{1\leq j\leq n,j\neq i}(\alpha_{i}-\alpha_{j})^{-1}$.

\begin{Theorem}\label{the-4.1}
Let $\ell=1$, $t=1$, $1\leq k\leq\frac{n-2}{2}$, $0 \leq h \leq k-1 $, $\eta \in \mathbb{F}_{q}^{*}$, $\alpha_{1},\alpha_{2},...,\alpha_{n}\in \mathbb{F}_{q}$ be distinct, if $u_{1},u_{2},...,u_{n}$ are all nonzero square elements of $\mathbb{F}_{q}$ with $u_{i}=v_{i}^{2},i=1,2,...,n$. Let $\bm \alpha = (\alpha_{1},\dots,\alpha_{n})$, $\bm v = (v_{1},\dots,v_{n})$, then $GTRS_{k,n}[\bm\alpha,1,h,\eta,\bm v]$ is a self orthogonal code.
\end{Theorem}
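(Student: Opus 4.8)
The plan is to reduce the claim to Proposition~\ref{pro-4.1}: writing $G$ for the generator matrix of $GTRS_{k,n}[\bm\alpha,1,h,\eta,\bm v]$ displayed after Definition~\ref{def-2} (with $\ell=1$, $h_{1}=h$, $t_{1}=1$, $\eta_{1}=\eta$), it suffices to show $GG^{T}=\mathbf{0}$. First I would observe that the rows of $G$ are exactly the vectors $ev_{\bm\alpha,\bm v}(p)$ as $p$ runs over the generating twisted monomials $x^{j}$ ($0\le j\le k-1$, $j\ne h$) together with $p_{h}(x)=x^{h}+\eta x^{k}$; here the point is that since $t=1$ the extra monomial sits at exponent $k-1+t=k$, so \emph{every} such $p$ has $\deg p\le k$. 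Consequently the $(a,b)$-entry of $GG^{T}$ equals $\sum_{i=1}^{n}v_{i}\,p_{a}(\alpha_{i})\cdot v_{i}\,p_{b}(\alpha_{i})=\sum_{i=1}^{n}u_{i}\,p_{a}(\alpha_{i})p_{b}(\alpha_{i})$, where $u_{i}=v_{i}^{2}$.

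The crux is then a degree count. For any two generating twisted monomials $p_{a},p_{b}$ the product $p_{a}p_{b}$ has degree at most $2k$, and the hypothesis $k\le\frac{n-2}{2}$ gives $2k\le n-2$. Expanding $p_{a}(x)p_{b}(x)=\sum_{j=0}^{n-2}c_{j}x^{j}$ yields
$$
(GG^{T})_{a,b}=\sum_{i=1}^{n}u_{i}\,p_{a}(\alpha_{i})p_{b}(\alpha_{i})=\sum_{j=0}^{n-2}c_{j}\Big(\sum_{i=1}^{n}u_{i}\alpha_{i}^{j}\Big).
$$
Now I would invoke the defining property of $\bm u=(u_{1},\dots,u_{n})$ recalled before Theorem~\ref{the-4.1}: $\bm u$ spans the solution space of $A_{\alpha}\bm X^{T}=\mathbf{0}$, which is precisely the statement that $\sum_{i=1}^{n}u_{i}\alpha_{i}^{j}=0$ for all $0\le j\le n-2$. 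Hence every inner sum vanishes, so $GG^{T}=\mathbf{0}$, and Proposition~\ref{pro-4.1} gives that the code is self orthogonal.

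I do not expect a genuine obstacle in this argument; the only points requiring a little care are (i) confirming that $t=1$ is exactly what keeps $\deg(p_{a}p_{b})\le 2k$, so that the constraint $k\le\frac{n-2}{2}$ suffices (a larger twist $t$ would push the degree past $n-2$ and break the cancellation), and (ii) noting that the hypothesis "$u_{1},\dots,u_{n}$ are nonzero squares with $u_{i}=v_{i}^{2}$'' is precisely what is needed so that $\bm v\in(\mathbb{F}_{q}^{*})^{n}$ and the GTRS code is well defined in the sense of Definition~\ref{def-2}. It is also worth remarking that $h$ plays no role in the conclusion beyond $0\le h\le k-1$: it only relocates which monomial carries the twist, and the degree bound $\deg p\le k$ is insensitive to this, which is why the self orthogonality holds uniformly in $h$.
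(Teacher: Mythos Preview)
Your proposal is correct and follows essentially the same approach as the paper: both reduce to Proposition~\ref{pro-4.1}, write the $(a,b)$-entry of $GG^{T}$ as $\sum_{i}u_{i}\,(\text{power sums in }\alpha_{i})$, bound the exponents by $2k\le n-2$, and invoke $A_{\alpha}\bm u^{T}=\mathbf{0}$. The only cosmetic difference is that the paper splits the computation into three cases (neither, one, or both rows carrying the twist), whereas you handle them uniformly via the single observation $\deg(p_{a}p_{b})\le 2k$; this is a presentational streamlining, not a different argument.
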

\begin{proof}
By the condition above, a generator matrix $G$ of $GTRS_{k,n}[\bm\alpha,1,h,\eta,\bm v]$ is
$$
\left[
\begin{matrix}
v_{1}&v_{2}&\dots&v_{n}\\
v_{1}\alpha_{1}&v_{2}\alpha_{2}&\dots&v_{n}\alpha_{n}\\
\vdots&\vdots&&\vdots\\
v_{1}\alpha_{1}^{h-1}&v_{2}\alpha_{2}^{h-1}&\dots&v_{n}\alpha_{n}^{h-1}\\
v_{1}(\alpha_{1}^{h}+\eta\alpha_{1}^{k})&v_{2}(\alpha_{2}^{h}+\eta\alpha_{2}^{k})&\dots&v_{n}(\alpha_{n}^{h}+\eta\alpha_{n}^{k})\\
v_{1}\alpha_{1}^{h+1}&v_{2}\alpha_{2}^{h+1}&\dots&v_{n}\alpha_{n}^{h+1}\\
\vdots&\vdots&&\vdots\\
v_{1}\alpha_{1}^{k-1}&v_{2}\alpha_{2}^{k-1}&\dots&v_{n}\alpha_{n}^{k-1}
\end{matrix}
\right].
$$
Then the value of $(GG^{T})_{ij}$ has the following three cases:

(i) \,\,  $\sum_{i=1}^{n}u_{i}\alpha_{i}^{m_{1}}$.

(ii)  \,\, $\sum_{i=1}^{n}u_{i}\alpha_{i}^{h+m_{2}} + \eta\sum_{i=1}^{n}u_{i}\alpha_{i}^{k+m_{2}}$.

(iii) \,\, $\sum_{i=1}^{n}u_{i}\alpha_{i}^{2h} + 2\eta\sum_{i=1}^{n}u_{i}\alpha_{i}^{k+h} + \eta^{2}\sum_{i=1}^{n}u_{i}\alpha_{i}^{2k}$.

\noindent Since $k\leq\frac{n-2}{2},0\leq h\leq k-1$, then $m_{1},h+m_{2},k+m_{2},2h,2k,k+h$ are all less than $n-1$.
By $A_{\bm\alpha}\bm u^{T}= \bm 0$, we get $GG^{T}=0$.
\end{proof}
\begin{Corollary}\label{cor-4.1}
Let $\mathbb{F}_{2^{m}}$ be a finite field with $m\geq2$, $V$ be a proper subgroup of $(\mathbb{F}_{2^{m}},+)$ with order $s$, $L=\{\overline{b_{1}},\dots,\overline{b_{\frac{2^{m-1}}{s}}}\}$ be a proper subgroup of quotient group $\mathbb{F}_{2^{m}}/V$ with order $\frac{2^{m-1}}{s}$. Suppose $\ell=1$, $(t,h)=(1,k-1), 1\leq k\leq\frac{n-2}{2}, \{\alpha_{1},\alpha_{2},...,\alpha_{n}\}$ is a subset of $\cup_{j=1}^{\frac{2^{m-1}}{s}}(b_{j}+V)$, and $\eta^{-1}\in \mathbb{F}_{2^{m}}\setminus \cup_{j=1}^{\frac{2^{m-1}}{s}}(b_{j}+V)$. Let $\beta \in \mathbb{F}_{q}\backslash\{0, 1, -1\}$, $u_{i}=v_{i}^{2},i=1,2,...,n$, and let $\bm \alpha = (\alpha_{1},\dots,\alpha_{n})$, $\bm v^{'}= (v_{1},\dots,v_{k},\beta v_{k+1},\dots,\beta v_{n})$, then $GTRS_{k,n}[\bm\alpha,1,k-1,\eta,\bm v^{'}]$ is an LCD MDS code.

\end{Corollary}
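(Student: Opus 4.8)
The plan is to combine three ingredients that are all available to us: Theorem~\ref{the-4.1} (which gives self-orthogonality of a GTRS code), Lemma~\ref{lem-2} (the MDS criterion for the $(t,h)=(1,k-1)$ twist), and Lemma~\ref{lem-4.2} (which upgrades a self-orthogonal MDS code to an LCD MDS code by rescaling the last $n-k$ columns of a generator matrix). So the proof splits naturally into two independent verifications — that the code is MDS, and that it is self-orthogonal — followed by an application of Lemma~\ref{lem-4.2}.

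First I would establish that $GTRS_{k,n}[\bm\alpha,1,k-1,\eta,\bm v]$ is MDS. Since $GTRS_{k,n}[\bm\alpha,1,k-1,\eta,\bm v]$ is equivalent to $TRS_{k,n}[\bm\alpha,1,k-1,\eta]$ (the column scaling by $\bm v$ does not affect the weights), it suffices to show the latter is MDS, and by Lemma~\ref{lem-2} this amounts to checking $\eta\sum_{i\in\mathcal I}\alpha_i\neq-1$ for every $\mathcal I\subseteq\{1,\dots,n\}$ with $|\mathcal I|=k$. Here the coset hypothesis enters exactly as in Theorem~\ref{the-4} (which treats this same case $(t,h)=(1,k-1)$): each $\alpha_i$ lies in $\cup_{j}(b_j+V)$, so its image in $\mathbb{F}_{2^m}/V$ lies in the subgroup $L$, hence $\overline{\sum_{i\in\mathcal I}\alpha_i}\in L$ since $L$ is closed under addition; if we had $\eta\sum_{i\in\mathcal I}\alpha_i=-1$, then $\overline{-\eta^{-1}}=\overline{\sum_{i\in\mathcal I}\alpha_i}\in L$, and since in characteristic $2$ we have $-\eta^{-1}=\eta^{-1}$, this contradicts $\eta^{-1}\in\mathbb{F}_{2^m}\setminus\cup_j(b_j+V)$. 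Thus the code is MDS with parameters $[n,k,n-k+1]$. (Note the hypothesis $|L|=2^{m-1}/s$ makes $L$ a \emph{proper} subgroup, so such $\eta^{-1}$ exists.)

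Next I would establish self-orthogonality of $GTRS_{k,n}[\bm\alpha,1,k-1,\eta,\bm v]$ where $v_i^2=u_i$. This is precisely the content of Theorem~\ref{the-4.1} applied with the twist position $h=k-1$: the hypotheses there require $\ell=1$, $t=1$, $1\le k\le\frac{n-2}{2}$, $0\le h\le k-1$, and that the $u_i$ be nonzero squares $u_i=v_i^2$ — all of which are assumed in the statement of Corollary~\ref{cor-4.1} (in characteristic $2$ every element of $\mathbb{F}_{2^m}$ is a square, so the square-root condition is automatic and the $v_i$ are nonzero since the $u_i$ are). Hence $G G^T=0$ by Proposition~\ref{pro-4.1}, i.e. the code is self-orthogonal.

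Finally I would combine the two. Write a generator matrix of the self-orthogonal MDS code $GTRS_{k,n}[\bm\alpha,1,k-1,\eta,\bm v]$ in the block form $G=[A_{k\times k}:B_{k\times(n-k)}]$ coming from the matrix displayed before Theorem~\ref{the-4.1} (with the $h$-th row twisted). Scaling the last $n-k$ columns by $\beta$ is exactly the operation of Lemma~\ref{lem-4.2}, so the resulting code, which is $GTRS_{k,n}[\bm\alpha,1,k-1,\eta,\bm v']$ with $\bm v'=(v_1,\dots,v_k,\beta v_{k+1},\dots,\beta v_n)$, is an $[n,k,n-k+1]$ LCD MDS code for every $\beta\in\mathbb{F}_q\setminus\{0,1,-1\}$. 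The only point requiring a moment's care — the main (minor) obstacle — is to confirm that multiplying columns $k+1,\dots,n$ of the generator matrix by $\beta$ is the same as replacing $\bm v$ by $\bm v'$ in the evaluation map, and that the first $k$ columns of $G$ genuinely form the block $A$ (equivalently, that $A$ is invertible, which holds because the code is MDS); both are immediate from the explicit form of $G$.
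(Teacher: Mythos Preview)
Your proof is correct and follows essentially the same route as the paper: the paper notes that every element of $\mathbb{F}_{2^m}$ is a square (so the $v_i$ exist), invokes Theorem~\ref{the-4} for MDS, Theorem~\ref{the-4.1} for self-orthogonality, and then Lemma~\ref{lem-4.2} for the LCD conclusion. The only cosmetic difference is that you unpack the MDS step via Lemma~\ref{lem-2} rather than simply citing Theorem~\ref{the-4}, but this is exactly the argument inside that theorem's proof.
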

\begin{proof} Since any element of $\mathbb{F}_{2^{m}}$ is square element, then $v_{i}(i=1,\dots,n)$ are existing. Let $\bm v= (v_{1},\dots,v_{n})$, by Theorem ~\ref{the-4}, $GTRS_{k,n}[\bm\alpha,1,k-1,\eta,\bm v]$ is an MDS code. From Theorem~\ref{the-4.1}, $GTRS_{k,n}[\bm\alpha,1,k-1,\eta,\bm v]$ is a self orthogonal code. Finally, by Lemma~\ref{lem-4.2}, $GTRS_{k,n}[\bm\alpha,1,k-1,\eta,\bm v^{'}]$ is an LCD MDS code.
\end{proof}
\begin{Remark}
We choose $\{\alpha_{1},\alpha_{2},...,\alpha_{n}\}=L$ as in Corollary~\ref{cor-4.1}, then $n=2^{m-1}$. For any $1\leq k\leq\frac{n-2}{2}$, we give a construction of $[\,n,k,n-k+1\,]$ LCD MDS codes.
\end{Remark}
\begin{Remark}
If $2<k<n-2$, $\mid \mathbb{F}_{2^{m}}\setminus L\mid>6$ i.e. $m\geq4$, then our construction contains many non-GRS LCD MDS codes.
\end{Remark}
The results above are about even characteristic. Next, we give LCD MDS codes in odd characteristic. Firstly, we give some useful results.
\begin{Proposition}\cite{Mesnager S.}
If $G$ is a generator matrix of an $[n,k]$ linear code $C$, then $C$ is an LCD code if and only if $GG^{T}$ is nonsingular.
\end{Proposition}
\begin{Proposition}\cite{Mesnager S.}
Let $G=\{\alpha_{1},\alpha_{2},\dots,\alpha_{n}\}$ be a subgroup of $\mathbb{F}_{q}^{*}$ with order $n$, then for any integer $t$, we have

\begin{equation}\label{eq-1}
\alpha_{1}^{t}+\alpha_{2}^{t}+\cdots+\alpha_{n}^{t}=
\begin{cases}
n  & t\equiv 0\pmod{n}, \\
0  & \text otherwise.
\end{cases}
\end{equation}

\end{Proposition}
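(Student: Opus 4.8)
The plan is to exploit the standard fact that any finite subgroup of the multiplicative group of a field is cyclic. So I would first fix a generator $\gamma$ of $G$; since $|G|=n$, the element $\gamma$ has order exactly $n$, and $G=\{\gamma^{0},\gamma^{1},\dots,\gamma^{n-1}\}$ with $\gamma^{n}=1$. Reindexing the sum over this description of $G$, the power sum in question becomes the geometric sum
$$
S:=\sum_{i=1}^{n}\alpha_{i}^{t}=\sum_{j=0}^{n-1}\bigl(\gamma^{t}\bigr)^{j}.
$$

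Next I would split into the two cases according to whether $n\mid t$. If $n\mid t$, then $\gamma^{t}=(\gamma^{n})^{t/n}=1$, so every summand equals $1$ and $S$ is the image in $\mathbb{F}_{q}$ of the integer $n$. If $n\nmid t$, then $\gamma^{t}\neq 1$ because $\gamma$ has order exactly $n$; multiplying the geometric sum by $\gamma^{t}-1$ telescopes to
$$
(\gamma^{t}-1)\,S=(\gamma^{t})^{n}-1=(\gamma^{n})^{t}-1=1-1=0,
$$
and since $\gamma^{t}-1$ is a nonzero element of the field $\mathbb{F}_{q}$ it is invertible, forcing $S=0$.

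There is essentially no obstacle here; the only point that deserves a word of care is the meaning of ``$n$'' on the right-hand side in the case $n\mid t$: it denotes the image of the integer $n$ in $\mathbb{F}_{q}$, and one should observe that this image is nonzero, so that the two outputs $n$ and $0$ are genuinely distinct and the statement is not vacuous. Indeed, since $G\subseteq\mathbb{F}_{q}^{*}$ we have $n\mid q-1$, and $q-1$ is coprime to the characteristic $p$ of $\mathbb{F}_{q}$; hence $p\nmid n$ and $n$ does not vanish modulo $p$. I would close by remarking that this is precisely the form in which the identity will be used: in the generator-matrix computations for the odd-characteristic LCD construction, the entries of $GG^{T}$ are power sums $\sum_{i}\alpha_{i}^{t}$ of locators ranging over a multiplicative subgroup, and the identity pins down exactly which of these entries vanish.
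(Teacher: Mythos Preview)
Your argument is correct and is the standard one: use cyclicity of $G$, write the power sum as a geometric progression in $\gamma^{t}$, and split on whether $\gamma^{t}=1$. Note, however, that the paper does not supply its own proof of this proposition; it is quoted as a known result from \cite{Mesnager S.} and used as a black box in the computation of $GG^{T}$ in Theorem~\ref{the-4.3}. So there is nothing to compare against, but your proof stands on its own and your closing remark about $p\nmid n$ (hence $n\neq 0$ in $\mathbb{F}_{q}$) is exactly the point the paper invokes at the start of the proof of Theorem~\ref{the-4.3} when it writes ``$\gcd(n,p)=1$''.
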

\begin{Theorem}\label{the-4.2}\cite{M. Bossert}
Let $q_{0}$ be a prime power, and $1=s_{0}<\dots<s_{\ell}$ be non-negative integers such that $\mathbb{F}_{q_{0}^{s_{0}}}\subset\mathbb{F}_{q_{0}^{s_{1}}}\subset\dots\subset\mathbb{F}_{q_{0}^{s_{\ell}}}=\mathbb{F}_{q}$
is a chain of subfields. Fix $k<n\leq q_{0}$ and the entries of $\bm\alpha=(\alpha_{1},\dots,\alpha_{n})\in\mathbb{F}_{q_{0}}^{n}$ as pairwise distinct locators. Finally, let $\bm t$, $\bm h$ and $\bm \eta$ be chosen as in Definition~\ref{def-2}, such that $\eta_{i}\in\mathbb{F}_{q_{0}^{s_{i}}}\backslash\mathbb{F}_{q_{0}^{s_{i-1}}}$ for $i=1,\dots,\ell$. Then $TRS_{k,n}[\bm\alpha,\mathbf{t},\mathbf{h},\bm\eta]$ is MDS.
\end{Theorem}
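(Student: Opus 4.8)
The plan is to verify the two defining properties of an $[\,n,k,n-k+1\,]$ MDS code directly from the polynomial description of $TRS_{k,n}[\bm\alpha,\mathbf t,\mathbf h,\bm\eta]$: that the evaluation map $ev_{\bm\alpha,\bm 1}$ is injective on $\mathcal P_{k,n}[\mathbf t,\mathbf h,\bm\eta]$ (so the dimension is exactly $k$), and that every nonzero $f\in\mathcal P_{k,n}[\mathbf t,\mathbf h,\bm\eta]$ vanishes at at most $k-1$ of the locators (so the minimum distance is at least $n-k+1$). The first property is immediate: any such $f$ has degree at most $n-1$, because each twist exponent satisfies $k-1+t_j\le k-1+(n-k)=n-1$, so a nonzero $f$ cannot vanish at all $n$ distinct points $\alpha_1,\dots,\alpha_n$, whence $ev_{\bm\alpha,\bm 1}(f)\neq 0$. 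Together with the Singleton bound $d\le n-k+1$, the second property then forces equality, giving the MDS conclusion. For $\ell=1$ this recovers Lemmas~\ref{lem-1} and~\ref{lem-2}, but for general $\ell$ the root‑counting argument below is cleaner than chasing minors as in Lemma~\ref{lem-3}.

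For the distance bound, suppose for contradiction that a nonzero $f=\sum_{i=0}^{k-1}f_ix^i+\sum_{j=1}^{\ell}\eta_jf_{h_j}x^{k-1+t_j}$ vanishes at $k$ pairwise distinct locators, all of which lie in $\mathbb F_{q_0}$. Let $j^{\ast}$ be the largest index $j$ with $f_{h_j}\neq 0$; if no such index exists, $f$ is a nonzero polynomial of degree $<k$ with $k$ roots, a contradiction, so $j^{\ast}\ge 1$. By maximality of $j^{\ast}$ the twist terms with $j>j^{\ast}$ vanish, so $f(x)=g(x)+\eta_{j^{\ast}}f_{h_{j^{\ast}}}x^{k-1+t_{j^{\ast}}}$ with $g(x)=\sum_{i=0}^{k-1}f_ix^i+\sum_{j=1}^{j^{\ast}-1}\eta_jf_{h_j}x^{k-1+t_j}$. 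The key observation is that every coefficient of $g$ lies in the subfield $\mathbb F_{q_0^{s_{j^{\ast}-1}}}$: indeed $f_i\in\mathbb F_{q_0}=\mathbb F_{q_0^{s_0}}$, and for $j\le j^{\ast}-1$ we have $\eta_j\in\mathbb F_{q_0^{s_j}}\subseteq\mathbb F_{q_0^{s_{j^{\ast}-1}}}$ by the chain of subfields. Among the $k$ distinct roots of $f$ at most one equals $0$, so when $k\ge 2$ we may pick a nonzero root $\beta\in\mathbb F_{q_0}$ (the degenerate cases $j^{\ast}=0$ and $k=1$, where $\ell\le 1$ and $f_{h_1}=f_0\neq 0$ prevents $0$ from being a root, are settled by direct inspection). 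Evaluating at $\beta$ and solving for the twist coefficient gives
$$\eta_{j^{\ast}}=-\,g(\beta)\,f_{h_{j^{\ast}}}^{-1}\,\beta^{-(k-1+t_{j^{\ast}})},$$
and since $g(\beta)\in\mathbb F_{q_0^{s_{j^{\ast}-1}}}$ while $f_{h_{j^{\ast}}},\beta\in\mathbb F_{q_0}\subseteq\mathbb F_{q_0^{s_{j^{\ast}-1}}}$, the right-hand side lies in $\mathbb F_{q_0^{s_{j^{\ast}-1}}}$. This contradicts the hypothesis $\eta_{j^{\ast}}\notin\mathbb F_{q_0^{s_{j^{\ast}-1}}}$, which finishes the argument.

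The step that needs care is the bookkeeping in the second paragraph: one must isolate precisely the twist of \emph{largest} index $j^{\ast}$, since that is exactly what confines all remaining coefficients (those of $g$) to the single subfield $\mathbb F_{q_0^{s_{j^{\ast}-1}}}$; isolating any other twist would leave higher twists present, whose scalars live in strictly larger fields, and the conclusion $\eta_{j^{\ast}}\in\mathbb F_{q_0^{s_{j^{\ast}-1}}}$ would fail to follow. The only other places the naive argument could stumble are the degenerate situations $j^{\ast}=0$, a root equal to $0$, and $k=1$, each handled by a one-line remark. Finally, the hypothesis $n\le q_0$ is used only to guarantee that $n$ distinct locators exist in $\mathbb F_{q_0}$, and the chain hypothesis enters only through the inclusions $\mathbb F_{q_0^{s_j}}\subseteq\mathbb F_{q_0^{s_{j'}}}$ for $j\le j'$.
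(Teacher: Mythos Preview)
The paper does not supply its own proof of Theorem~\ref{the-4.2}; the result is quoted from \cite{M. Bossert}. Independently of that, your argument has a genuine gap. The step ``every coefficient of $g$ lies in the subfield $\mathbb F_{q_0^{s_{j^{\ast}-1}}}$: indeed $f_i\in\mathbb F_{q_0}=\mathbb F_{q_0^{s_0}}$'' is unjustified: by Definition~\ref{def-2} the message coefficients $f_i$ range over the \emph{top} field $\mathbb F_q=\mathbb F_{q_0^{s_\ell}}$, not over $\mathbb F_{q_0}$. Only the evaluation points $\alpha_i$ are constrained to $\mathbb F_{q_0}$. Hence $g(\beta)$ need not lie in $\mathbb F_{q_0^{s_{j^{\ast}-1}}}$, and the displayed expression for $\eta_{j^{\ast}}$ no longer forces $\eta_{j^{\ast}}\in\mathbb F_{q_0^{s_{j^{\ast}-1}}}$; the contradiction evaporates.

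A correct route, essentially that of \cite{M. Bossert}, is linear-algebraic rather than root-counting. Expand any $k\times k$ minor of the generator matrix $G_{\bm\alpha,\bm t,\bm h,\bm\eta}$ by multilinearity in the twist rows to obtain
\[
\det G_I=\sum_{S\subseteq\{1,\dots,\ell\}}\Bigl(\prod_{j\in S}\eta_j\Bigr)\,d_{S},
\]
where each $d_{S}\in\mathbb F_{q_0}$ (all matrix entries other than the scalars $\eta_j$ are powers of the $\alpha_i\in\mathbb F_{q_0}$) and $d_{\emptyset}$ is a nonzero Vandermonde determinant. The chain hypothesis $\eta_j\in\mathbb F_{q_0^{s_j}}\setminus\mathbb F_{q_0^{s_{j-1}}}$ then makes the $2^{\ell}$ products $\prod_{j\in S}\eta_j$ linearly independent over $\mathbb F_{q_0}$: splitting a dependence relation according to whether $\ell\in S$ gives $A+\eta_\ell B=0$ with $A,B\in\mathbb F_{q_0^{s_{\ell-1}}}$, so $B\neq 0$ would force $\eta_\ell\in\mathbb F_{q_0^{s_{\ell-1}}}$, and one inducts. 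Consequently $\det G_I\neq 0$ for every $I$, and the code is MDS. Your polynomial viewpoint can be salvaged, but it requires decomposing $f$ coefficientwise over an $\mathbb F_{q_0^{s_{\ell-1}}}$-basis of $\mathbb F_q$ rather than assuming the $f_i$ already lie in a proper subfield.
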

\begin{Theorem}\label{the-4.3}
Let $q=p^{m},~\{\alpha_{1},\alpha_{2},...,\alpha_{n}\}$ a subgroup of $(\mathbb{F}_{q}^{*},~\cdot)$, $\bm t=(1,2,...,k),~ \bm h=(0,1,...,k-1),~\bm \eta=(\eta_{1},\eta_{2},...,\eta_{k})\in(\mathbb{F}_{q}^{*})^{n},~k\leq\frac{n}{2}$. If $k,~n$ satisfy one of the following conditions, then $TRS_{k,n}[\bm\alpha,\mathbf{t},\mathbf{h},\bm\eta]$ is LCD.

(A) $k\geq3, k=\frac{n+1}{3}$.

(B) $k\geq3, k=\frac{n}{3}$.

(C) $k\geq3, \frac{n+2}{3}\leq k<\frac{n}{2}$, and all entries of $(\eta_{n-2k+2},\eta_{n-2k+3},...,\eta_{k})+(\eta_{k},\eta_{k-1},...,\eta_{n-2k+2})$ are nonzero.

(D) $k\geq3, k=\frac{n}{2}$, $1+\eta_{1}^{2}$ is nonzero and all entries of $(\eta_{2},\eta_{3},...,\eta_{k})+(\eta_{k},\eta_{k-1},...,\eta_{2})$ are nonzero.

(E) $k=2, n=4$, and $2(1+\eta_{1}^{2})\eta_{2}$ is nonzero.

(F) $k=2, n=5$.

(G) $k=2, n=6$.

(H) $k=1, n=2$, and $1+\eta_{1}^{2}$ is nonzero.

(I) $k=1, n>2$.

\end{Theorem}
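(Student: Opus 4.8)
The plan is to use the criterion recalled just before Theorem~\ref{the-4.2}: a code with generator matrix $G$ is LCD if and only if $GG^{T}$ is nonsingular. So the whole statement reduces to writing down a generator matrix, computing $\det(GG^{T})$, and checking it is nonzero in each of the nine cases.

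First I fix the generator matrix. Since $\bm h=(0,1,\dots,k-1)$ and $\bm t=(1,2,\dots,k)$, \emph{every} monomial $x^{i}$ with $0\le i\le k-1$ is twisted, so $\mathcal P_{k,n}[\bm t,\bm h,\bm\eta]$ has the basis $g_{i}(x)=x^{i}+\eta_{i+1}x^{k+i}$, $0\le i\le k-1$, and $G$ is the $k\times n$ matrix whose $i$-th row is $\mathrm{ev}_{\bm\alpha,\bm 1}(g_{i})$. Indexing rows and columns of $GG^{T}$ by $a,b\in\{1,\dots,k\}$ and expanding $\sum_{\ell=1}^{n}g_{a-1}(\alpha_{\ell})g_{b-1}(\alpha_{\ell})$ gives
\[
(GG^{T})_{a,b}=S(a+b-2)+(\eta_{a}+\eta_{b})\,S(k+a+b-2)+\eta_{a}\eta_{b}\,S(2k+a+b-2),
\]
with $S(t)=\sum_{\ell=1}^{n}\alpha_{\ell}^{t}$. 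Since $\{\alpha_{1},\dots,\alpha_{n}\}$ is a subgroup of $\mathbb F_{q}^{*}$ of order $n$, the power-sum identity~\eqref{eq-1} gives $S(t)=n$ when $n\mid t$ and $S(t)=0$ otherwise; and $n\mid q-1$ forces $n\neq0$ in $\mathbb F_{q}$.

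Next I use $k\le n/2$ to see which entries survive. The arguments $a+b-2$, $k+a+b-2$, $2k+a+b-2$ run over $[0,2k-2]$, $[k,3k-2]$, $[2k,4k-2]$, each a run of $2k-1<n$ consecutive integers, hence containing at most one multiple of $n$; a short check shows $S(a+b-2)\ne0$ only at $a=b=1$, $S(k+a+b-2)\ne0$ only on the anti-diagonal $a+b=n-k+2$, and $S(2k+a+b-2)\ne0$ only on $a+b=n-2k+2$. Thus $GG^{T}=nM$ where $M$ is very sparse, supported on at most the three ``anti-diagonals'' $a+b\in\{2,\ n-2k+2,\ n-k+2\}$: values lying outside $\{2,\dots,2k\}$ contribute nothing, and $n-2k+2$ collapses onto the corner $(1,1)$ precisely when $n=2k$, adding $\eta_{1}^{2}$ there. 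The nine hypotheses (A)--(I) are exactly the disjoint ranges of $(k,n)$ with $k\le n/2$ sorted by which of these anti-diagonals are active, and I compute $\det M$ in each from its Leibniz expansion.

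The determinant splits along this structure. When the anti-diagonal $a+b=n-k+2$ is out of range --- cases (A), (B), (F), (G), and the $1\times1$ cases (H), (I) --- $M$ is the corner entry together with one anti-diagonal, the Leibniz expansion has a single nonzero term, and $\det M=\pm\big(\prod_{a\in J}\eta_{a}\big)^{2}$ for a suitable index set $J$, hence nonzero with no extra hypothesis; the $1\times1$ cases give $\det M=1$ in (I) and $\det M=1+\eta_{1}^{2}$ in (H), which explains the side condition there. When $n=2k$ --- cases (D), (E) --- the corner entry is $1+\eta_{1}^{2}$ while the rest of $M$ is the pure anti-diagonal $a+b=k+2$ on indices $\{2,\dots,k\}$, so $\det M=\pm(1+\eta_{1}^{2})\prod_{a=2}^{k}(\eta_{a}+\eta_{k+2-a})$, nonzero exactly under the stated conditions. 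The genuinely delicate case, and the main obstacle, is (C), where all three anti-diagonals are active and pairwise disjoint: there one must show that the only permutation contributing to $\det M$ is the product of the order-reversing involutions on $\{1,\dots,n-2k+1\}$ and on $\{n-2k+2,\dots,k\}$ --- equivalently, that the cofactor of the corner entry vanishes because the submatrix on rows and columns $\{2,\dots,k\}$ has an identically zero row, namely the one indexed by $n-2k+1$. Granting this, $\det M=\pm\big(\prod_{a=1}^{n-2k+1}\eta_{a}\big)^{2}\prod_{a=n-2k+2}^{k}(\eta_{a}+\eta_{n-k+2-a})$, and since every $\eta_{a}\neq0$ this is nonzero precisely when all entries of $(\eta_{n-2k+2},\dots,\eta_{k})+(\eta_{k},\dots,\eta_{n-2k+2})$ are nonzero, which is hypothesis (C). In all nine cases $\det(GG^{T})=n^{k}\det M\neq0$, so $GG^{T}$ is nonsingular and the code is LCD.
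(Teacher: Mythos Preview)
Your proof is correct and follows essentially the same approach as the paper: both compute $GG^{T}$ entrywise via the power-sum identity~\eqref{eq-1}, arrive at the same sparse anti-diagonal structure, and then verify nonsingularity case by case. The paper organizes the computation by writing $G=A+B$ and displaying each $GG^{T}$ explicitly before asserting ``is nonsingular,'' whereas you package the same information as three anti-diagonals and supply the Leibniz-expansion argument (in particular the vanishing-cofactor observation in case~(C)) that the paper leaves implicit; the underlying mathematics is identical.
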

\begin{proof}
Since $\{\alpha_{1},\alpha_{2},...,\alpha_{n}\}$ is a subgroup of $\mathbb{F}_{q}^{*}$, then $\gcd(n,p)=1$.
Let
$$
A=\left[
\begin{matrix}
1&1&\dots&1\\
\alpha_{1}&\alpha_{2}&\dots&\alpha_{n}\\
\vdots&\vdots&&\vdots\\
\alpha_{1}^{k-1}&\alpha_{2}^{k-1}&\dots&\alpha_{n}^{k-1}
\end{matrix}
\right]_{k\times n},
$$

$$
B=\left[
\begin{matrix}
\eta_{1}\alpha_{1}^{k}&\eta_{1}\alpha_{2}^{k}&\dots&\eta_{1}\alpha_{n}^{k}\\
\eta_{2}\alpha_{1}^{k+1}&\eta_{2}\alpha_{2}^{k+1}&\dots&\eta_{2}\alpha_{n}^{k+1}\\
\vdots&\vdots&&\vdots\\
\eta_{k}\alpha_{1}^{2k-1}&\eta_{k}\alpha_{2}^{2k-1}&\dots&\eta_{k}\alpha_{n}^{2k-1}
\end{matrix}
\right]_{k\times n}.
$$
From the condition, a generator matrix of $TRS_{k,n}[\bm\alpha,\mathbf{t},\mathbf{h},\bm\eta]$ is $G=A+B$.
We have $GG^{T}=(A+B)(A+B)^{T}=AA^{T}+AB^{T}+BA^{T}+BB^{T}$. By Equation (4.1),
$$
AA^{T}=\left[
\begin{matrix}
n&0&\dots&0\\
0&0&\dots&0\\
\vdots&\vdots&&\vdots\\
0&0&\dots&0
\end{matrix}
\right]_{k\times k}.
$$
When (A) holds, we have $AB^{T}=0$, $BA^{T}=0$ and
$$
BB^{T}=\left[
\begin{matrix}
0&\dots&0&\eta_{1}\eta_{k}n\\
0&\dots&\eta_{2}\eta_{k-1}n&0\\
\vdots&&\vdots&\vdots\\
\eta_{k}\eta_{1}n&\dots&0&0
\end{matrix}
\right]_{k\times k}.
$$
Then
$$
GG^{T}=\left[
\begin{matrix}
n&\dots&0&\eta_{1}\eta_{k}n\\
0&\dots&\eta_{2}\eta_{k-1}n&0\\
\vdots&&\vdots&\vdots\\
\eta_{k}\eta_{1}n&\dots&0&0
\end{matrix}
\right]_{k\times k}
$$
is nonsingular.

When (B) holds, we have $AB^{T}=0$, $BA^{T}=0$ and
$$
BB^{T}=\left[
\begin{matrix}
0&\dots&0&0\\
0&\dots&0&\eta_{2}\eta_{k}n\\
0&\dots&\eta_{3}\eta_{k-1}n&0\\
\vdots&&\vdots&\vdots\\
0&\eta_{k}\eta_{2}n&\dots&0
\end{matrix}
\right]_{k\times k}.
$$
Then
$$
GG^{T}=\left[
\begin{matrix}
n&\dots&0&0\\
0&\dots&0&\eta_{2}\eta_{k}n\\
0&\dots&\eta_{3}\eta_{k-1}n&0\\
\vdots&&\vdots&\vdots\\
0&\eta_{k}\eta_{2}n&\dots&0
\end{matrix}
\right]_{k\times k}
$$
is nonsingular.

When (C) holds, $AB^{T}=(BA^{T})^{T}$, we have
$$
BA^{T}=\left[
\begin{matrix}
0&\dots&0&\dots&\dots&0\\
0&\dots&0&\dots&\dots&0\\
\vdots&&\vdots&&&\vdots\\
0&\dots&0&\dots&\dots&0\\
0&\dots&0&\dots&0&\eta_{n-2k+2}n\\
0&\dots&0&\dots&\eta_{n-2k+3}n&0\\
\vdots&&\vdots&&\vdots&\vdots\\
0&\dots&0&\eta_{k}n&\dots&0
\end{matrix}
\right]_{k\times k},
$$
$$
BB^{T}=\left[
\begin{matrix}
0&\dots&0&\eta_{1}\eta_{n-2k+1}n&0&\dots&0\\
0&\dots&\eta_{2}\eta_{n-2k}n&0&0&\dots&0\\
\vdots&\vdots&\vdots&\vdots&\vdots&\vdots&\vdots\\
\eta_{n-2k+1}\eta_{1}n&0&\dots&\dots&\dots&\dots&0\\
0&0&\dots&\dots&\dots&\dots&0\\
\vdots&&&&&&\vdots\\
0&0&\dots&\dots&\dots&\dots&0
\end{matrix}
\right]_{k\times k}.
$$
Then
$$
GG^{T}=\left[
\begin{matrix}\begin{smallmatrix}
n&\dots&0&\eta_{1}\eta_{n-2k+1}n&0&\dots&0\\
0&\dots&\eta_{2}\eta_{n-2k}n&0&0&\dots&0\\
\vdots&\vdots&\vdots&\vdots&\vdots&\vdots&\vdots\\
\eta_{n-2k+1}\eta_{1}n&0&\dots&\dots&\dots&\dots&0\\
0&\dots&\dots&0&\dots&0&(\eta_{n-2k+2}+\eta_{k})n\\
0&\dots&\dots&0&\dots&(\eta_{n-2k+3}+\eta_{k-1})n&0\\
\vdots&&&\vdots&&\vdots&\vdots\\
0&\dots&\dots&0&(\eta_{k}+\eta_{n-2k+2})n&\dots&0
\end{smallmatrix}\end{matrix}
\right]_{k\times k}
$$
is nonsingular.

When (D) holds, we have
$$
BA^{T}=\left[
\begin{matrix}
0&\dots&0&0\\
0&\dots&0&\eta_{2}n\\
0&\dots&\eta_{3}n&0\\
\vdots&&\vdots&\vdots\\
0&\eta_{k}n&\dots&0
\end{matrix}
\right]_{k\times k},
$$
$$
BB^{T}=\left[
\begin{matrix}
\eta_{1}^{2}n&0&\dots&0\\
\vdots&&\vdots&\vdots\\
0&0&\dots&0
\end{matrix}
\right]_{k\times k}.
$$
Then
$$
GG^{T}=\left[
\begin{matrix}
(\eta_{1}^{2}+1)n&\dots&0&0\\
0&\dots&0&(\eta_{2}+\eta_{k})n\\
0&\dots&(\eta_{3}+\eta_{k-1})n&0\\
\vdots&&\vdots&\vdots\\
0&(\eta_{k}+\eta_{2})n&\dots&0
\end{matrix}
\right]_{k\times k}
$$
is nonsingular.

The statements (E), (F), (G), (H), (I) are calculated similarly.
\end{proof}
\begin{Corollary}\label{cor-4.2}
Let $q$ be an odd prime power, and $1=s_{0}<\dots<s_{\ell}$ be non-negative integers such that $\mathbb{F}_{q_{0}}=\mathbb{F}_{q_{0}^{s_{0}}}\subsetneq\mathbb{F}_{q_{0}^{s_{1}}}\subsetneq\dots\subsetneq\mathbb{F}_{q_{0}^{s_{k}}}=\mathbb{F}_{q}$
is a chain of subfields. Let $\{\alpha_{1},\alpha_{2},...,\alpha_{n}\}$ be a subgroup of $\mathbb{F}_{q_{0}}^{*}$, and let $\eta_{i}\in\mathbb{F}_{q_{0}^{s_{i}}}\backslash\mathbb{F}_{q_{0}^{s_{i-1}}}$ for $i=1,\dots,k$ with $1+\eta_{1}^{2}\neq0$, $\eta_{j}\neq 0$ for $i=2,\dots,k$. Suppose $\bm t=(1,2,...,k), \bm h=(0,1,...,k-1),\bm \eta=(\eta_{1},\eta_{2},...,\eta_{k})\in(\mathbb{F}_{q}^{*})^{n}$. Let $k\leq\frac{n}{2}$, if $k$, $n$ satisfy one of the following conditions,

(A) $k\geq3, k=\frac{n+1}{3}$;

(B) $k\geq3, k=\frac{n}{3}$;

(C) $k\geq3, \frac{n+2}{3}\leq k<\frac{n}{2}$;

(D) $k\geq3, k=\frac{n}{2}$;

(E) $k=1, n>2$;

\noindent then $TRS_{k,n}[\bm\alpha,\mathbf{t},\mathbf{h},\bm\eta]$ is an LCD MDS code.
\end{Corollary}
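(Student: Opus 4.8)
The plan is to obtain the MDS property from Theorem~\ref{the-4.2} and the LCD property from Theorem~\ref{the-4.3}, the only real work being to verify that the tower-of-subfields choice of the twist coefficients $\eta_i$ automatically forces the extra hypotheses on $\bm\eta$ that appear in parts (C) and (D) of Theorem~\ref{the-4.3}.

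First I would record the MDS property. Since $\{\alpha_1,\dots,\alpha_n\}$ is a subgroup of $\mathbb{F}_{q_0}^{*}$, we have $n\le q_0-1<q_0$ and all locators lie in $\mathbb{F}_{q_0}$; also $k\le n/2<n$, and the twists $\bm t=(1,\dots,k)$, $\bm h=(0,\dots,k-1)$ are admissible because $k\le n-k$. Taking $\ell=k$ in Theorem~\ref{the-4.2} with the chain $\mathbb{F}_{q_0}=\mathbb{F}_{q_0^{s_0}}\subsetneq\cdots\subsetneq\mathbb{F}_{q_0^{s_k}}=\mathbb{F}_q$ and $\eta_i\in\mathbb{F}_{q_0^{s_i}}\setminus\mathbb{F}_{q_0^{s_{i-1}}}$, that theorem applies verbatim and gives that $TRS_{k,n}[\bm\alpha,\bm t,\bm h,\bm\eta]$ is MDS.

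Next I would prove the key auxiliary fact: for $1\le i<i'\le k$ one has $\eta_i+\eta_{i'}\neq 0$, and for every $i$ one has $2\eta_i\neq 0$. For the first, $s_i\le s_{i'-1}$ forces $\eta_i\in\mathbb{F}_{q_0^{s_{i'-1}}}$ while $\eta_{i'}\notin\mathbb{F}_{q_0^{s_{i'-1}}}$, so $\eta_{i'}=-\eta_i$ is impossible. For the second, $0\in\mathbb{F}_{q_0^{s_{i-1}}}$ gives $\eta_i\neq 0$, and $q$ is odd. (So the blanket hypothesis $\eta_j\neq0$ is automatic, and only $1+\eta_1^{2}\neq0$ is genuinely extra.) Then I would go through the five cases. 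Cases (E), (A), (B) reduce to Theorem~\ref{the-4.3}(I), (A), (B) respectively, none of which constrains $\bm\eta$. In case (C) the coordinates of $(\eta_{n-2k+2},\dots,\eta_k)+(\eta_k,\dots,\eta_{n-2k+2})$ are $\eta_j+\eta_{\,n-k+2-j}$ with $j$ running over $\{n-2k+2,\dots,k\}$, and this index set is nonempty and contained in $\{1,\dots,k\}$ exactly because $(n+2)/3\le k<n/2$; in case (D), where $k=n/2$, the coordinates of $(\eta_2,\dots,\eta_k)+(\eta_k,\dots,\eta_2)$ are $\eta_j+\eta_{\,k+2-j}$ for $j\in\{2,\dots,k\}$. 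In each coordinate the two indices are either distinct, giving a nonzero entry by the auxiliary fact, or equal, giving $2\eta_j\neq 0$; hence the extra hypotheses of Theorem~\ref{the-4.3}(C) and (D) hold, and combined with $1+\eta_1^2\neq0$ these parts of Theorem~\ref{the-4.3} apply, so $TRS_{k,n}[\bm\alpha,\bm t,\bm h,\bm\eta]$ is LCD. Together with the MDS property, this yields an LCD MDS code.

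I expect the index bookkeeping for the reversed $\eta$-vectors --- in particular confirming that the ``central'' coordinate, when it exists, contributes $2\eta_j$ rather than a sum of two distinct $\eta$'s --- to be the only slightly delicate point; there is no real mathematical obstacle.
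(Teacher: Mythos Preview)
Your proposal is correct and follows essentially the same route as the paper: apply Theorem~\ref{the-4.2} for MDS and Theorem~\ref{the-4.3} for LCD, with the only real check being that the subfield-chain condition on the $\eta_i$ forces the nonvanishing of the reversed-sum entries in cases (C) and (D). Your treatment is in fact slightly more careful than the paper's --- you explicitly separate the distinct-index case (handled via $\eta_i\in\mathbb{F}_{q_0^{s_{i'-1}}}$ while $\eta_{i'}\notin\mathbb{F}_{q_0^{s_{i'-1}}}$) from the equal-index ``central'' coordinate (handled via odd characteristic), and you note that $\eta_j\neq0$ is automatic rather than an extra hypothesis.
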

\begin{proof} We simply apply Theorem~\ref{the-4.2} and Theorem~\ref{the-4.3}. Some care must be taken in condition (C) and (D). For condition (C), if the length of $(\eta_{n-2k+2},\eta_{n-2k+3},\dots,\eta_{k})$ is even, then all entries of $(\eta_{n-2k+2},\eta_{n-2k+3},...,\eta_{k})+(\eta_{k},\eta_{k-1},...,\eta_{n-2k+2})$ are $\eta_{s}+\eta_{j}$ with $s\neq j$. From the choice of $\bm \eta$, we know $\eta_{s}$ and $\eta_{j}$ are not in same field with different indices. Then, all entries of $(\eta_{n-2k+2},\eta_{n-2k+3},...,\eta_{k})+(\eta_{k},\eta_{k-1},...,\eta_{n-2k+2})$ are nonzero. If the length of $(\eta_{n-2k+2},\eta_{n-2k+3},\dots,\eta_{k})$ is odd, then there is one position with same index and others are not same. However, all the fields are odd characteristic and $\eta_{j}\neq 0$ for $i=2,\dots,k$. So all entries of $(\eta_{n-2k+2},\eta_{n-2k+3},...,\eta_{k})+(\eta_{k},\eta_{k-1},...,\eta_{n-2k+2})$ are nonzero. It is similar to explain condition (D).
\end{proof}
\begin{Remark}
From the state of Corollary~\ref{cor-4.2}, we know $n|(q_{0}-1)$ for some $q_{0}$ an odd prime power. Then, for the case $k=2$ in Theorem~\ref{the-4.3}, we need some special $q_{0}$. For a sufficient large $q$, we can construct LCD MDS codes when $3\leq k$ and $\frac{n}{3}\leq k\leq \frac{n}{2}$.
\end{Remark}
\section{Conclusion}
This paper presents new constructions of MDS twisted Reed-Solomon codes and in some cases, our constructions can get some MDS codes with the length longer than the constructions of previous works. And we also give two new constructions of LCD MDS codes from generalized twisted Reed-Solomon codes. However, the code length is also restrained for general $t$ and $h$. It will be significant to construct an MDS twisted Reed-Solomon code with a longer length.

\vskip 4mm

\noindent {\bf Acknowledgement.} This work was supported by NSFC (Grant No. 11871025).


\end{document}